\newcommand{\ie}{%
  that is\@\xspace%
}
\newcommand{\cf}{%
  cf.\@\xspace%
}
\newcommand{\Wlog}{%
  Without loss of generality\xspace%
}
\newcommand{\Accept}{\textbf{accept}}  
\newcommand{\Reject}{\textbf{reject}}
\newcommand{\DistVert}[1]{\textit{vertex #1}}
\newcommand{\DistAllVert}[1][]{\textit{every vertex #1}}
\newcommand{\DistNumRounds}[1]{\underline{\smash{#1 rounds}}}
\newcommand{\DistNumRoundsAllVert}[2]{\DistNumRounds{#1}, \textbf{let} \DistAllVert[#2]}
\newcommand{\dalg}[1]{%
  \mathcal{#1}
}
\newcommand{\dalgon}[2]{%
  \dalg{#1}_{#2}
}
\newcommand{\fstop}{%
  \, .
}
\newcommand{\innerprod}[2]{%
  \ensuremath{\langle #1, #2 \rangle}%
}
\newcommand{\hinnerprod}[2]{%
  \ensuremath{\left\langle #1, #2 \right\rangle}%
}
\newcommand{\vecto}[2][]{%
  \ensuremath{\vec{\boldsymbol{#2}}\ifstrempty{#1}{}{(#1)}}%
}
\newcommand{\defeq}{%
  \ensuremath{:=}
}
\newcommand{\minus}{%
  \ensuremath{\, \backslash \,}
}
\newcommand{\cupdot}{%
  \ensuremath{\mathbin{\dot{\cup}}}%
}
\newcommand{\onevector}[1]{%
  \mathds{1}_{#1}%
}
\newcommand{\Tr}{T}
\newcommand{\normp}[2]{%
  \ensuremath{\lVert #2 \rVert_#1}%
}
\newcommand{\hnormp}[2]{%
  \ensuremath{\left\lVert #2 \right\rVert_#1}%
}
\newcommand{\normb}[1]{%
  \ensuremath{\normp{2}{#1}}%
}
\newcommand{\hnormb}[1]{%
  \ensuremath{\hnormp{2}{#1}}%
}
\newcommand{\setr}{%
  \ensuremath{\mathbb{R}}%
}
\newcommand{\setn}{%
  \ensuremath{\mathbb{N}}%
}
\newcommand{\onerange}[1]{%
  \ensuremath{[#1]}%
}
\newcommand{\poly}{%
  \mathrm{poly}
}
\newcommand{\sep}{%
  \; | \;%
}
\newcommand{\OO}{%
  \ensuremath{\mathcal{O}}%
}
\newcommand{\OOt}{%
  \ensuremath{\tilde{\mathcal{O}}}%
}
\newcommand{\kdisc}{%
  \mbox{$k$-disc}\xspace%
}
\newcommand{\kdiscs}{%
  \mbox{$k$-discs}\xspace%
}
\newcommand{\fdisc}[2]{%
  \fkdisc{k}{#1}{#2}
}
\newcommand{\fkdisc}[3]{%
  \ensuremath{\mathrm{disc}_{#1}(#2, #3)}
}
\newcommand{\fdisci}[2]{%
  \fkdisci{k}{#1}{#2}
}
\newcommand{\fkdisci}[3]{%
  \ensuremath{\mathrm{disc}^*_{#1}(#2, #3)}
}
\newcommand{\neighbors}[1]{\Gamma(#1)}
\newcommand{\vol}[1]{\mathrm{vol}(#1)}
\newcommand{\girth}[1]{\mathrm{girth}(#1)}
\newcommand{\dist}[2]{\mathrm{dist}(#1, #2)}
\newcommand{\edgecut}[3]{%
  \ensuremath{#1(#2, #3)}
}
\newcommand{\Adj}{A}
\newcommand{\Deg}{D}
\newcommand{\Id}{I}
\newcommand{\Nlap}{L}
\newcommand{\Walk}[1][]{W\ifstrempty{#1}{}{(#1)}}
\newcommand{\Walkex}[2][]{W^{#2}\ifstrempty{#1}{}{(#1)}}
\newcommand{\Nwalk}[1][]{N\ifstrempty{#1}{}{(#1)}}
\newcommand{\Nwalkew}{\mu}
\newcommand{\Nwalkev}{\vecto{f}}
\newcommand{\statdist}[1][]{\vecto{\pi}_{#1}}
\newcommand{\Degr}{D}
\newcommand{\degr}[2][]{d_{\ifstrempty{#1}{}{#1}}(#2)}
\newcommand{\setcond}[1]{\mathrm{cond}(#1)}
\newcommand{\graphcond}[1]{\Phi(#1)}
\declaretheoremstyle[
  notefont=\normalfont\bfseries,
  bodyfont=\normalfont\itshape
  ]{normal_style}
\declaretheoremstyle[
  notefont=\normalfont\bfseries,
  qed=$\scriptstyle\blacksquare$
  ]{definition_style}
\declaretheoremstyle[
	headfont=\normalfont\itshape,
  notefont=\normalfont\itshape,
  notebraces={}{},
  qed=\qedsymbol
  ]{proof_style}
\declaretheorem[
  name=Theorem,
  numberwithin=section,
  style=normal_style]{theorem}
\declaretheorem[
  name=Lemma,
  numberlike=theorem,
  style=normal_style]{lemma}
\declaretheorem[
  name=Proposition,
  numberlike=theorem,
  style=normal_style]{proposition}
\declaretheorem[
  name=Definition,
  numberlike=theorem,
  style=definition_style]{definition}
\newcommand{\dcm}[3]{\mathrm{DCM}(#1, #2, #3)}
\newcommand{\dna}[2]{\mathrm{DNDA}(#1, #2)}
\patchcmd{\ALG@doentity}{\item[]\nointerlineskip}{}{}{}
\newcommand{\proofref}[1]{\marginline{$\rightarrow$~p.~\pageref{#1}}}
\newlist{inenum}{enumerate*}{1}
\setlist[inenum]{label=(\roman*)}
\title{Distributed Testing of Conductance\footnote{The research leading to these results has received funding from the European Research Council under the European Union's Seventh Framework Programme (FP7/2007-2013)/ ERC grant agreement n$^\circ$~307696.}}
\author{Hendrik Fichtenberger\thanks{TU Dortmund, Dortmund (\texttt{hendrik.fichtenberger@tu-dortmund.de})} \and Yadu Vasudev\thanks{Department of Computer Science and Engineering, IIT Madras, Chennai (\texttt{yadu@cse.iitm.ac.in})}} 
\date{}%
\begin{document}
  \maketitle

  \begin{abstract}
    We study the problem of testing conductance in the setting of distributed computing and give a two-sided tester that takes $\OO(\log(n) / (\epsilon \Phi^2))$ rounds to decide if a graph has conductance at least $\Phi$ or is $\epsilon$-far from having conductance at least $\Phi^2 / 1000$ in the distributed CONGEST model. We also show that $\Omega(\log n)$ rounds are necessary for testing conductance even in the LOCAL model. In the case of a connected graph, we show that we can perform the test even when the number of vertices in the graph is not known a priori. This is the first two-sided tester in the distributed model we are aware of. A key observation is that one can perform a polynomial number of random walks from a small set of vertices if it is sufficient to track only some small statistics of the walks. This greatly reduces the congestion on the edges compared to tracking each walk individually.
  \end{abstract}

  \section{Introduction}

  Graphs arise as a natural model of many large data sets in applications like examining social networks, and analyzing structural properties of graphs is a fundamental computational problem. However, using exact algorithms to solve this task is often not an option because even linear time or space complexity exceeds available resources. Algorithms that compute an approximate result are more useful in these scenarios. Various frameworks have been studied to analyze the limits and merits of such algorithms in theory.

  Property testing algorithms derive approximate decisions by probing small parts of the input only. A tester for a graph property $\mathcal{P}$ is a randomized algorithm that, with high constant probability, accepts inputs that have $\mathcal{P}$ and rejects inputs that are $\epsilon$-far from having the property $\mathcal{P}$, \ie, at least an $\epsilon$-fraction of the edges has to be modified to make the graph have the property $\mathcal{P}$.  Testing graph properties in the classic, sequential computing model has been studied quite extensively.  Two-sided error testers may err on all graphs, while one-sided error testers have to present a witness when rejecting a graph. See \citep{GolInt,GolInt10,GolPro98} for introductions and surveys.

  Property testing in the distributed CONGEST model was first studied by \citet{BrakerskiP11} and later more thoroughly by \citet{Censor-HillelFSV16}. In this model, each vertex of the graph is equipped with a processor that has a unique identifier of size $\OO(\log n)$ and it knows only its neighboring vertices. The vertices of the graph communicate with each other in synchronized rounds such that in each round only communication of length $\OO(\log n)$ is allowed on every edge. Finally, every vertex casts a vote and a decision rule is applied on all votes to derive the answer of the tester. The complexity measure is the amount of rounds required to test the property. Edge congestion and round complexity strictly limit the amount of information on the whole graph that a single vertex can gather.

  In \citep{Censor-HillelFSV16}, it is shown that many one-sided error testers for dense graphs carry over from the sequential to the distributed setting. Furthermore, tight logarithmic bounds for testing bipartiteness and cycle-freeness in bounded degree graphs are proved. In \citep{FraigniaudRST16,EveThr17}, subgraph-freeness is studied for subgraphs on at most five vertices, trees and cliques.
  
  \subsection{Our Results}

  In this paper we study the problem of testing conductance of undirected graphs in the distributed model. We present a two-sided error \emph{distributed testing algorithm} in the CONGEST model for testing conductance, which is also the first two-sided error distributed tester we are aware of.

  \begin{theorem}
    \label{thm:upper-bound}
    Testing whether a graph $G=(V,E)$ has conductance at least $\Phi$ or is $\epsilon$-far from having conductance at least $\Phi^2 / 1000$ with two-sided error has complexity $\OO(\log (|V|+|E|) / (\epsilon \Phi^2))$ in the CONGEST model.
  \end{theorem}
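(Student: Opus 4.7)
The plan is to port the sequential two-sided conductance tester (along the lines of Goldreich--Ron and Kale--Seshadhri) to the CONGEST model. Recall the sequential template: sample $s = \Theta(1/\epsilon)$ starting vertices, run $k = \poly(n)$ random walks of length $\ell = \OO(\log n / \Phi^2)$ from each source $v$, and estimate the squared $\ell_2$-norm of the endpoint distribution $p_v^{(\ell)}$, \ie the collision probability. Completeness follows from the Cheeger-type spectral bound: if $\graphcond{G} \geq \Phi$, the lazy-walk spectral gap is $\Omega(\Phi^2)$, hence $\hnormb{p_v^{(\ell)} - \pi}^2 = \Oo(1/n)$ and the collision probability concentrates near $1/n$ for every source. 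Soundness relies on a structural lemma that a graph $\epsilon$-far from conductance $\Phi^2/1000$ contains a set of $\Omega(\epsilon n)$ vertices lying in subsets of small conductance from which length-$\ell$ walks cannot escape, so that $\hnormb{p_v^{(\ell)}}^2 \gg 1/n$; a sample of $s = \Theta(1/\epsilon)$ sources therefore contains a ``bad'' vertex with constant probability.

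The key observation, announced in the abstract, is that CONGEST need not transport individual walks. For each source $v$ and each vertex $u$, maintain an integer $c_v(u)$ counting walks from $v$ currently located at $u$. A single random-walk step is simulated by having $u$ draw a multinomial partition of $c_v(u)$ according to the lazy-walk transition probabilities and forward to each neighbor the corresponding count. Since $c_v(u) \leq k = \poly(n)$, every forwarded count fits in $\OO(\log n)$ bits; transmitting the $s = \OO(1/\epsilon)$ per-source counts across each edge costs $\OO(1/\epsilon)$ CONGEST rounds per simulated step. Simulating $\ell$ steps therefore costs $\OO(\ell \cdot s) = \OO(\log n / (\epsilon \Phi^2))$ rounds, exactly matching the claimed bound.

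It remains to read off the statistic at each source. Two options work: (a) use the return-probability estimator $c_v(v)/k$, which after $2\ell$ simulated steps is, by reversibility, a weighted proxy for $\hnormb{p_v^{(\ell)}}^2$ and is trivially local to $v$; or (b) compute the exact collision count $\sum_u \binom{c_v(u)}{2}$ by summing along a BFS tree rooted at $v$ of depth at most $\ell$ (the support of $c_v$ is contained in the $\ell$-neighborhood of $v$), using $\OO(\ell)$ further rounds. Each source then compares its estimator to a threshold $\tau = \Theta(1/n)$ and locally outputs \Accept{} or \Reject{}; non-source vertices always output \Accept{}, and the tester accepts \ifft every vertex accepts.

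The main obstacle I anticipate is the concentration analysis of the multinomial simulation: the rounding error introduced at each vertex in each step contributes a per-step variance that a priori compounds through $\ell$ rounds, so one must show that taking $k = \poly(n, 1/\epsilon, 1/\Phi)$ walks keeps the accumulated error well below the gap between the completeness and soundness thresholds. This should reduce to a martingale/Chernoff argument combined with a union bound over the $s \cdot \ell$ (source, step) pairs. A secondary, more routine, issue is that the soundness threshold depends on $n$ and on $|E|$ (because the lazy walk's stationary distribution is proportional to degree), which explains the $\log(|V|+|E|)$ rather than $\log|V|$ appearing in the statement; a standard distributed estimate of these quantities can be interleaved with the walk simulation within the same round budget.
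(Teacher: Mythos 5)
Your architecture is the paper's: the count-aggregation trick (forward, per source, only the multinomially split counts of walks crossing each edge, so that $|S| = \OO(1/\epsilon)$ sources cost $\OO(1/\epsilon)$ CONGEST rounds per simulated step and $\OO(\ell/\epsilon) = \OO(\log n/(\epsilon\Phi^2))$ rounds overall) is exactly the paper's key observation, and your completeness argument via the spectral mixing bound matches \cref{thm:high-cond-close}.

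The genuine gap is in the handling of unbounded degrees, which is precisely where the paper deviates from the Kale--Seshadhri template you are porting. First, sampling $\Theta(1/\epsilon)$ sources uniformly and asking for $\Omega(\epsilon n)$ weak vertices is the bounded-degree statement; since farness is measured in edges, the correct guarantee is that the weak set has \emph{volume} $\Omega(\epsilon m)$ (\cref{thm:small-volume-weak,thm:patch_graph}), and a uniform sample of $\OO(1/\epsilon)$ vertices can miss a weak set consisting of a few high-degree vertices. The paper therefore marks each $v$ independently with probability $\Theta(\degr{v}/(\epsilon m))$ (line~\ref{lin:choose_s} of \cref{alg:cond-tester}). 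Second, comparing a collision count or $c_v(v)/k$ to a threshold $\Theta(1/n)$ is wrong when the stationary distribution $\pi(u) = \degr{u}/(2m)$ is non-uniform: $\|p\|_2^2 \approx 1/n$ neither implies nor is implied by $p \approx \pi$. The paper instead estimates $\|W^\ell(v,\cdot)-\pi\|_2^2$ directly by having each $u$ compute $(\widehat{W}^\ell_{v,u})^2 - \widehat{W}^\ell_{v,u}\,\degr{u}/(2m) + \degr{u}^2/(4m^2)$ and aggregating the sum over a BFS tree; this is why $m$ must be computed first, and why the acceptance threshold is $m^{-15}$ rather than $\Theta(1/n)$. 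Your option (a), the return probability, is salvageable in principle (it converges to $\pi(v)$), but you would still have to build the degree weighting into the threshold and into the soundness gap. Finally, the preliminary $\OO(\log(n)/\Phi)$-depth BFS is not just for estimating $n$ and $m$: a graph that is far from having high conductance may have large diameter, in which case the tree aggregation never completes within the round budget, so the algorithm must explicitly reject when the BFS fails to cover the graph.
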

  
  Our algorithm is based on the idea of the classic tester in \citep{KalExp11} for bounded degree graphs, \ie, random walks mix rapidly in graphs with high conductance and they mix slowly for at least a small fraction of start vertices in graphs that are $\epsilon$-far from having high conductance. In bounded degree graphs, the discrepancy of the distribution of a random walk and its stationary distribution is measured by the walk's collision probability. However, in general graphs the endpoint distributions weighted by the vertices' degrees play a major role. This becomes an obstacle because one has to ensure that vertices of high degree are not missed.

  In the distributed model, one has to take care of edge congestion too. Simulating $\omega(1)$ random walks while keeping them distinguishable is very costly. However, one key observation here is that for approximating the discrepancy, it is sufficient to maintain only some statistics of the random walks, which reduces the congestion significantly. In particular, one has to transfer only the number of random walks that pass through an edge for each of a constant number of start vertices. We exploit this and some other properties of graphs with high conductance.
  
  If the graph is connected, we prove that the size of the input graph is not required to be known a priori to perform the test. On the other hand, there exists no tester for disconnected graphs if no prior knowledge of the graph is assumed at all. Since communication between two connected components is not allowed, we cannot distinguish a graph~$G$ with high conductance from a graph~$G'$ that is composed of two isolated copies of~$G$.

  We complement this result by showing that any distributed tester with this gap requires $\Omega(\log (n+m))$ rounds of communication in the (stronger) LOCAL model.

  \begin{theorem}
    \label{thm:lower_bound_nn}
    Testing whether a graph $G=(V,E)$ has conductance at least $\Phi$ or is $\epsilon$-far from having conductance at least $c \Phi^2$ requires $\Omega(\log (|V|+|E|))$ rounds of communication in the LOCAL model (for constants $c, \epsilon, \Phi$).
  \end{theorem}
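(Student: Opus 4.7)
My plan is to exhibit a YES graph $G_1$ and a NO graph $G_2$, both $d$-regular for a constant $d$, whose $r$-neighborhoods are locally indistinguishable whenever $r = o(\log n)$. Since $d$ is a constant, $|V|+|E| = \Theta(n)$, so such a pair will give the claimed $\Omega(\log(|V|+|E|))$ lower bound.

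\emph{Step~1 (construction).} I would take a $d$-regular family of high-girth expanders, for example Lubotzky--Phillips--Sarnak Ramanujan graphs or random $d$-regular graphs conditioned on large girth, each combining conductance $\Omega(1)$ with girth $\Omega(\log n)$. Writing $H_n$ for such a graph on $n$ vertices and setting $r \defeq c_0 \log n$ for a sufficiently small $c_0 > 0$, both $H_n$ and $H_{n/2}$ will have girth greater than $2r+1$. The YES instance would be $G_1 \defeq H_n$, whose conductance exceeds $\Phi$ for an appropriate $d$; the NO instance would be $G_2 \defeq H_{n/2} \cupdot H_{n/2}$, the disjoint union of two copies of the half-size expander.

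\emph{Step~2 (property side).} $G_1$ is YES by construction. $G_2$ is disconnected, so its conductance is $0$; to see it is $\epsilon$-far from conductance $c \Phi^2$, I would use the bipartition $(V_1, V_2)$ corresponding to the two components. It starts with no crossing edges, so any modification of at most $\epsilon m$ edges produces at most $\epsilon m$ crossings and changes each side's volume by $O(\epsilon m)$; the conductance of this cut is therefore at most $\epsilon m / (\vol{V_1} - O(\epsilon m)) = O(\epsilon)$, which stays strictly below $c \Phi^2$ for $\epsilon$ small enough relative to $c \Phi^2$.

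\emph{Step~3 (local indistinguishability).} Since both $G_1$ and $G_2$ are $d$-regular of girth larger than $2r$, the $r$-neighborhood of any vertex, viewed as an unlabeled graph, is the complete depth-$r$ $d$-ary tree, so an $r$-round LOCAL tester receives structurally identical views in the two graphs. To lift this to a statement about the tester's output, I plan to place both graphs on a common vertex set $[n]$ under a uniformly random ID assignment and, using the vertex-transitivity of the underlying expander, show that the distribution of the labeled $r$-view at any fixed vertex is the same in $G_1$ and in $G_2$. Since each vertex's vote depends only on this view, the joint distribution of the vote vector then coincides across the two graphs, and no aggregation rule can accept $G_1$ with probability $\geq 2/3$ and reject $G_2$ with probability $\geq 2/3$. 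The hard part will be exactly this lift: unique IDs and shared randomness introduce correlations between the views of different vertices, and controlling them cleanly requires a careful coupling of the two ID assignments, while Steps~1 and~2 are largely off-the-shelf.
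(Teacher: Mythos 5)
Your Steps~1 and~2 are sound and close in spirit to the paper's setup: the paper also starts from a high-girth $d$-regular Ramanujan expander so that every $r$-disc is a full $d$-ary tree, although its NO instance is built differently --- instead of a disjoint union of two half-size expanders, it keeps a single connected graph on the \emph{same} vertex set and repeatedly applies a rewiring lemma (\cref{thm:rewiring}) that removes edges from a fixed cut while provably preserving every vertex's disc isomorphism type. Your disjoint-union NO instance is still legitimate for the theorem as stated, since $n$ is known to the vertices and both instances have $n$ vertices.

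The genuine gap is in Step~3, and it is exactly the difficulty the paper singles out. From \enquote{the distribution of the labeled $r$-view at any fixed vertex is the same in $G_1$ and $G_2$} you conclude that \enquote{the joint distribution of the vote vector coincides}. That inference is invalid: equality of the per-vertex marginals says nothing about the joint distribution, and the joint distribution of the labeled views certainly differs between the two graphs (the union of all $r$-views determines the graph up to relabeling, and $G_1 \not\cong G_2$). Since the theorem is claimed for an \emph{arbitrary} aggregation rule $O$ applied to all $n$ votes, $O$ can in principle exploit correlations between votes --- for instance, statistics such as the number of vertices whose ID is minimal within their own $r$-ball have distributions governed by the global overlap pattern of $r$-balls, and that pattern differs between $H_n$ and $H_{n/2} \cupdot H_{n/2}$. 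The paper explicitly flags this: certain \emph{sets} of labeled views could occur in one distribution and not the other, and the standard collision-based argument does not apply because the views overlap heavily. Its fix is not a coupling of ID assignments but a change of model: it first converts any LOCAL tester into one in an ID-free \enquote{ISO-LOCAL} model by letting each vertex sample its own random ID (\cref{thm:lower_bound_implication}), and then proves (\cref{thm:iso_local_only_iso_type}) that in that model every vertex's output is a function only of the isomorphism type of its port-numbered $r$-disc, which is identically distributed (a port-numbered full $d$-ary tree) in both instances. The \enquote{careful coupling} you defer is therefore not a technicality but the entire content of the lower bound, and as planned it would not go through.
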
  

  For the lower bound, we construct two distributions on graphs of high and low conductance respectively such that the vertices' neighborhoods of radius $\Omega(\log n)$ are isomorphic for both. The idea is that within only $\OO(\log n)$ rounds, all vertices receive the same information up to isomorphism and therefore cannot distinguish between the two distributions. It seems possible that the distributed algorithm can glean information about the two distributions from the vertex labels of the subgraphs it has seen. For example, certain sets of labeled subgraphs might be present in (many of) the graphs with high conductance that are absent in (many of) the graphs with low conductance. However, we show that this would not give sufficient information. Since the local views of the vertices have a large overlap, our technique to rule out this issue differs from the collision-based argument that is often used in the classic setting of property testing.

  \subsection{Related Work}
  
  In the classic (\ie, non-distributed) setting of property testing, the problem of testing conductance in bounded degree graphs was first studied in \citep{GolTes00,CzumajS07}. \Citet{KalExp11} and \citet{NacTes10} give $\OOt(\Phi^{-2}\sqrt{n})$-query testers that accept graphs that have conductance at least~$\Phi$ and reject graphs that are $\epsilon$-far from having conductance at least $\Omega(\Phi^2)$. An algorithm for testing the cluster structure of graphs has been proposed in \citep{CzuTes15}. Testing conductance in unbounded degree graphs in the stronger rotation map model with query complexity roughly $\OOt(\Phi^{-2} \sqrt{m})$ was studied in~\citep{LiTes11}. Testing conductance properties restricted to small sets has been studied in \citep{LiTes15}. The optimal query complexity for testing conductance in general graphs of unbounded degree is still open.

  We review some results that use related techniques and discuss similarities and differences compared to our approach. In the CONGEST model, random walks have been analyzed by \citet{Censor-HillelFSV16} to design a tester for bipartiteness. The idea there is to perform a constant number of random walks from every vertex and to test if two such walks intersect in a cycle of odd length. Therefore, the algorithm needs to keep the exact trace of each of the random walks. In contrast, we are only interested in the start vertex and the (current) end vertex of a random walk. As a result we can perform polynomially many random walks from a constant number of vertices in the graph.

  Distributed random walks have been studied in \cite{DasDis13} and \cite{MolDis17} for computing the mixing time for random walks starting from a fixed vertex. In particular, \cite{MolDis17} show that one can approximate the mixing time~$\tau_v$ of a vertex $v$ in $\OO(\tau_v \log n)$ rounds by running $\poly(n)$ random walks~$v$ and comparing their endpoint distribution to the stationary distribution. The graph's mixing time $\tau = \max_v \tau_v$ relates to the conductance by $c_1\Phi^2 / \log n) \leq 1/ \tau \leq c_2\Phi$. A straightforward approach based on \citep{MolDis17} leads to an $\OO(n \log^2 (n) / \Phi^2)$ round algorithm for \emph{approximating} $\Phi$ with a multiplicative gap of $\Theta(\Phi / \log n)$. In comparison, our \emph{tester's} gap does not depend on $n$ and its complexity is only logarithmic in $n$. One reason is that if the graph is \emph{far from} having conductance $\Omega(\Phi^2)$, there exist many vertices with large mixing times compared to the case that the graph has conductance $\Phi$ (see the proof of \cref{thm:upper-bound} for details). This is not necessarily the case if the graph is not $\epsilon$-far from having conductance $\Omega(\Phi^2)$.

  \section{Preliminaries}
  
  Let $G=(V,E)$ be a graph and let $S, T \subseteq V, S \cap T = \emptyset$ be sets of vertices. For simplicity, we denote $|V|$ and $|E|$ by $n$ and $m$ respectively for the graph $G$ at hand. Let $\degr{v}$ be the degree of vertex $v \in V$. We write $\bar{S}$ for the set $V \backslash S$. The set of vertices in $\bar{S}$ that are adjacent to some $u \in S$ is denoted by $\neighbors{S}$. The \emph{volume} of $S$ is the sum of degrees of vertices in $S$, \ie, $\vol{S} \defeq \sum_{v \in S} \degr{v}$. The cut between $S$ and $T$ is denoted by $\edgecut{E}{S}{T} = E \cap (S \times T)$. For a set $S\subseteq V$ such that $\vol{S} \leq \vol{\bar{S}}$, the conductance of $S$ is $\setcond{S} = |\edgecut{E}{S}{\bar{S}}|/\vol{S}$. The conductance of $G$ is defined as $\graphcond{G} = \min_{S\subset V} \setcond{S}$. %

  \subsection{Distributed Computing}
\label{sec:distr-comp}
  In the distributed computational model, a computation network $G=(V,E)$ with a processor associated to each vertex $v \in V$ is given. Each processor $v$ has access to numbered communication channels to its neighbors in $G$. Additionally, it may have some specific input $I(v)$. The computation operates in synchronized rounds that are divided into three phases. In each round, each processor may do some local computation first, then it may send a message to each of its neighbors, and finally it receives the messages sent from its neighbors.

  \begin{definition}[Distributed Computational Model, DCM]
    \label{def:dcm}
    Let $G=(V,E)$ be a graph and $p_G = (p_v)_{v \in V}$ with $p_v : \onerange{\degr{v}}
    \rightarrow \neighbors{v}$ be a bijective function, \ie, an adjacency list representation of
    $G$. Let $I : V \rightarrow \{ 0, 1 \}^*$ be a mapping from the set of vertices to bit
    strings.  An instance of the distributed computational model on $G$, $p_G$ and $I$,
    $\dcm{G}{p_G}{I}$, is defined as follows.  Each vertex $v \in V$ is a processor that has
    communication access to its neighbors $p_v(1), \ldots, p_v(\degr{v})$ by ports numbered $1,
    \ldots, \degr{v}$.  The model operates in synchronized rounds, where each round $r$ consists of
    three phases: 
    \begin{inenum}
      \item Each vertex performs local computation,
      \item each vertex $v$ sends a message to its neighbor $p_v(i)$, 
        denoted	$s_r(v, i)$, for all $i \in \degr{v}$,
      \item each vertex $u$ receives a message from its neighbor $p_u(j)$, for all $j \in \degr{u}$.
    \end{inenum}
    The distributed computational model DCM is the set of all instances $\dcm{G}{p_G}{I}$.
  \end{definition}

  The LOCAL model is the subset of the DCM such that for each vertex $v \in V$, the input $I(v)$ is only $n$ and a numerical vertex identifier from $[n^c]$ for some universal constant $c$. The CONGEST model is the subset of the LOCAL model such that the size of each message $s_r(v,i)$ is restricted to $c \log n$ bits.

  A distributed network decision algorithm $\dna{\dalg{A}}{O}$ is an algorithm $\dalg{A}$ that is deployed to the vertices of a DCM to decide a property of an instance of the model.  In particular, the output of $\dalg{A}$ is a single bit, and the final decision is obtained by applying a function $O(\cdot)$ to the union of all vertices' answers.

  \begin{definition}[Distributed Network Decision Algorithm]
  	\label{def:dna}
    Let $\dalg{A}$ be an algorithm that takes a bit string as input and outputs a single bit, and
    let $O : \{ 0, 1 \}^* \rightarrow \{ 0, 1 \}$ be a function.  When the distributed network
    decision algorithm $\dna{\dalg{A}}{O}$ is run on an instance $\dcm{G}{p_G}{I}$, a copy of
    $\dalg{A}$ is deployed to every vertex $v$ with input $I(v)$ and run in parallel as described in \cref{def:dcm}.  We refer to the copy of $\dalg{A}$ deployed to $v$ by $\dalgon{A}{v}$.  When
    every vertex $v_i$ has terminated its computation with output bit $b_{v_i}$, the decision of
    $\dna{\dalg{A}}{O}$ is $O(b_{v_1} b_{v_2} \cdots b_{v_n})$.
  \end{definition}

\subsection{Distributed property testing}
\label{sec:distr-prop-test}

A distributed property testing algorithm is a distributed algorithm as defined in \cref{def:dna} that accepts graphs that have a property, and rejects graphs that are $\epsilon$-far from the property. We say that a graph $G$ with $n$ vertices and $m$ edges is $\epsilon$-far from a property ${\cal P}$ if at least $\epsilon m$ edges of $G$ have to be modified to make the new graph have the property ${\cal P}$. 

A one-sided error distributed $\epsilon$-test accepts all graphs with property ${\cal P}$, whereas it rejects, with probability at least $2/3$, all graphs that are $\epsilon$-far from the property. In this paper, we give a two-sided (error) property tester that is also allowed to err, with probability at most $1/3$, when the graph has the property.

\begin{definition}[Two-sided tester]
  A two-sided (error) distributed $\epsilon$-test for a property ${\cal P}$ is a $\dna{\dalg{A}}{O}$, where $O(b_{v_1} b_{v_2} \cdots b_{v_n})=1$ iff $b_{v_i} = 1$ for all $v_i \in V$ such that the following conditions hold:
  \begin{itemize}
  \item If $G$ has the property ${\cal P}$, then, with probability at least $2/3$, $b_{v_i}=1$ for all $v_i \in V$.
  \item If $G$ is $\epsilon$-far from ${\cal P}$, then, with probability at least $2/3$, there exists a $v_i \in V$ such that $b_{v_i}=0$. \qedhere
  \end{itemize}
  \label{defn:dist-pt}
\end{definition}

The guarantees given by our tester are actually a bit stronger in the sense that the tester can be modified such that either $b_v = 0$ or $b_v = 1$ for all $v \in V$ simultaneously.
See \cref{sec:open_problems} for a discussion of the acceptance behavior.

  \section{Testing Using Random Walks}
  \label{sec:upper_bound}

  In this section we will present the distributed algorithm for testing whether a graph has conductance at least~$\Phi$ or is $\epsilon$-far from having conductance at least~$\Phi^2 / 1000$. The core idea of the algorithm is to perform random walks from a small set of vertices and test whether these walks converge to the stationary distribution rapidly, which is the case for graphs with high conductance. It is based on the ideas of \citet{KalExp11} and \citet{GolTes00}.

  Before we describe the algorithm, we give a few useful definitions and lemmas. A {\em lazy random walk} on a graph $G=(V,E)$ on $n$ vertices is a random walk on the graph, where at each vertex $v$ the walk chooses to stay at $v$ with probability $1/2$ and chooses a neighbor $u$ with probability $1/(2\degr{v})$. The walk matrix $\Walk = [w_{uv}]_{u,v \in \onerange{n}}$ is defined by $w_{uv} \defeq 1/2$ if $u=v$, $w_{uv} \defeq 1/(2 \degr{v})$ if $u \neq v, (u,v) \in E$ and $w_{uv} \defeq 0$ otherwise. Notice that for irregular graphs, $\Walk$ is not symmetric. To analyze these random walks, one can draw on the \emph{normalized walk matrix}, which is a symmetric matrix similar to $\Walk$. The normalized walk matrix~$\Nwalk$ of~$G$ is $\Degr^{-1/2} \Walk \Degr^{1/2}$, where $\Degr$ is the diagonal matrix with $\Degr(u,u) \defeq d(u)$.

  Since $\Nwalk$ is a real symmetric matrix, it has real eigenvalues. Let $1 = \Nwalkew_1, \ldots, \Nwalkew_n \geq 0$ be its eigenvalues, and let $\{ \Nwalkev_i \}_{i \in \onerange{n}}$ be its orthonormal eigenbasis. We have $\Nwalkev_1 = \sqrt{\statdist}$, where $\statdist$ is the random walk's \emph{stationary distribution}. In particular, it is well known that $\statdist[v] = \degr{v} / (2m)$. For more details on spectral graph theory, refer to \citep{ChuSpe}.

  It is well known that graphs with high conductance have small diameter.
  \begin{lemma}[{\citep[\cf Theorem~2]{ChuDia89}}]
    Let $G=(V,E)$ be a graph with conductance $\Phi$.
    The diameter of $G$ is at most $(3 / \Phi) \ln(m)$.
    \label{cor:diameter}
  \end{lemma}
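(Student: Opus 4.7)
The plan is to bound the diameter by growing BFS-balls of exponentially increasing volume around every vertex. Let $v \in V$ be arbitrary and let $S_i$ denote the set of vertices at distance at most $i$ from $v$; I will show that $\vol{S_i}$ grows by a factor of roughly $(1+\Phi)$ per step until it exceeds $m$, and then that two such balls (from any two vertices) must intersect.

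First, suppose $\vol{S_i} \leq m = \vol{\bar{S_i}}$. By the definition of conductance we have $|\edgecut{E}{S_i}{\bar{S_i}}| \geq \Phi \vol{S_i}$. Since every edge of this cut has exactly one endpoint in $\neighbors{S_i}$, the volume of $\neighbors{S_i}$ is at least $|\edgecut{E}{S_i}{\bar{S_i}}|$. Because $S_{i+1} = S_i \cupdot \neighbors{S_i}$ (the union is disjoint as $\neighbors{S_i} \subseteq \bar{S_i}$), this yields
\[
  \vol{S_{i+1}} \;=\; \vol{S_i} + \vol{\neighbors{S_i}} \;\geq\; (1+\Phi)\vol{S_i}\fstop
\]
Iterating from $\vol{S_0} = \degr{v} \geq 1$ gives $\vol{S_i} \geq (1+\Phi)^i$ as long as the previous ball still has volume at most $m$. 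Hence after $k^* = \lceil \ln(m) / \ln(1+\Phi) \rceil$ steps we must have $\vol{S_{k^*}} > m$.

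Finally, for any two vertices $u,v$ I apply the above to both endpoints: after $k^*$ rounds each, the two BFS balls have total volume strictly greater than $2m = \vol{V}$, so they must share at least one vertex, which by the triangle inequality gives $\dist{u}{v} \leq 2 k^*$. Using the elementary inequality $\ln(1+\Phi) \geq 2\Phi/3$ valid for $\Phi \in [0,1]$ (which follows from concavity of $\ln$ and can be checked at the endpoint $\Phi=1$), this bound simplifies to $\dist{u}{v} \leq (3/\Phi) \ln(m)$, as claimed.

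The only delicate step is the volume-doubling inequality: one must be careful to distinguish $\vol{\neighbors{S_i}}$ (the sum of degrees of new vertices) from $|\edgecut{E}{S_i}{\bar{S_i}}|$ (the number of boundary edges), and to note that the former dominates the latter because each boundary edge contributes at least one to the degree of some vertex in $\neighbors{S_i}$. Everything else reduces to a geometric-series calculation and a logarithm inequality.
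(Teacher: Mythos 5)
The paper offers no proof of this lemma at all --- it is imported as a citation to Chung's eigenvalue--diameter theorem, whose original argument is spectral (polynomial functional calculus applied to the adjacency matrix, bounding how fast $A^k$ fills in off-diagonal entries). Your ball-growing argument is a genuinely different, purely combinatorial route, and it is essentially correct: the chain $|\edgecut{E}{S_i}{\bar{S_i}}| \geq \Phi\,\vol{S_i}$, then $\vol{\neighbors{S_i}} \geq |\edgecut{E}{S_i}{\bar{S_i}}|$ because each cut edge contributes at least one to the degree of its endpoint in $\neighbors{S_i}$, then $S_{i+1} = S_i \cupdot \neighbors{S_i}$, gives the multiplicative growth $\vol{S_{i+1}} \geq (1+\Phi)\vol{S_i}$ exactly as you say, and $\ln(1+\Phi) \geq 2\Phi/3$ on $[0,1]$ is checked correctly. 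One advantage of your route is that it yields the $1/\Phi$ dependence directly; going through Cheeger's inequality to invoke the spectral version would degrade this to $1/\Phi^2$. Two small blemishes: (i) the line ``$\vol{S_i} \leq m = \vol{\bar{S_i}}$'' should read $\vol{S_i} \leq m \leq \vol{\bar{S_i}}$, since $\vol{\bar{S_i}} = 2m - \vol{S_i}$; what you actually need is $\vol{S_i} \leq \vol{\bar{S_i}}$, which is equivalent to $\vol{S_i} \leq m$. (ii) There is an off-by-one: with $k^* = \lceil \ln(m)/\ln(1+\Phi)\rceil$ you only guarantee $\vol{S_{k^*}} \geq m$ (not $>m$), and even in the disjoint case the two boundary spheres could be joined by a single edge, so the clean conclusion is $\dist{u}{v} \leq 2k^* + 1 \leq (3/\Phi)\ln(m) + O(1)$ rather than exactly $(3/\Phi)\ln(m)$. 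This additive slack is immaterial for the paper, which runs the BFS to depth $(6/\Phi)\ln n$ anyway, but if you want the literal constant you should either tighten the bookkeeping or state the bound with the $+O(1)$.
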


  \Citet{SinAlg93} proved that there is a tight connection between the conductance and the mixing time of random walks. In particular, the $L_2$ distance of any starting distribution $\vecto{\pi'}$ to $\statdist$ after $\Phi^{-2} \log n$ steps is $\OO(1/n)$.

  \begin{lemma}[{\citep[\cf Theorem~2.5]{SinAlg93}}]
    \label{thm:high-cond-close}
    Let $G=(V,E)$ be a graph with conductance $\Phi$.
    For any starting distribution $\vecto{\pi'}$, it holds that
    $%
      \normb{\Walk^\ell \vecto{\pi'} - \statdist}
      \leq \left( 1- \Phi^2/2 \right)^\ell
    $.%
  \end{lemma}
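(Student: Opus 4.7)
The plan is to work with the symmetric normalized walk matrix $\Nwalk = \Degr^{-1/2}\Walk\Degr^{1/2}$, which shares the eigenvalues of $\Walk$ but admits an orthonormal eigenbasis $\Nwalkev_1,\ldots,\Nwalkev_n$ with $\Nwalkev_1 = \sqrt{\statdist}$ and $\Nwalkew_1 = 1$. Using the similarity $\Walk^\ell = \Degr^{1/2}\Nwalk^\ell\Degr^{-1/2}$, the target can be rewritten as
\[
\Walk^\ell \vecto{\pi'} - \statdist
= \Degr^{1/2}\bigl(\Nwalk^\ell \Degr^{-1/2}\vecto{\pi'} - \Degr^{-1/2}\statdist\bigr).
\]

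First, I would verify that the top eigencomponent cancels exactly. Using $\statdist[v] = \degr{v}/(2m)$, a direct computation gives $\Degr^{-1/2}\statdist = (1/\sqrt{2m})\,\Nwalkev_1$. Expanding $\Degr^{-1/2}\vecto{\pi'} = \sum_i c_i \Nwalkev_i$ in the orthonormal basis, the coefficient $c_1 = \hinnerprod{\Degr^{-1/2}\vecto{\pi'}}{\Nwalkev_1}$ reduces, after cancellation of the $\sqrt{\degr{v}}$ factors, to $\sum_v \pi'(v)/\sqrt{2m} = 1/\sqrt{2m}$ because $\vecto{\pi'}$ is a probability distribution. Hence
\[
\Nwalk^\ell \Degr^{-1/2}\vecto{\pi'} - \Degr^{-1/2}\statdist = \sum_{i\geq 2} c_i \Nwalkew_i^\ell \Nwalkev_i.
\]

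Second, I would invoke Cheeger's inequality to pass from the combinatorial bound on conductance to a spectral one: for a graph of conductance $\Phi$, the eigenvalues of the lazy normalized walk matrix satisfy $\Nwalkew_i \in [0,\,1-\Phi^2/2]$ for all $i \geq 2$ (the lower bound uses laziness, the upper bound is Cheeger applied to the lazy chain). Parseval then gives $\bnormb{\sum_{i\geq 2} c_i \Nwalkew_i^\ell \Nwalkev_i} \leq (1-\Phi^2/2)^\ell\,\normb{\Degr^{-1/2}\vecto{\pi'}}$, which is the desired contraction in the symmetric picture.

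The main obstacle is the final step: the conjugation by $\Degr^{1/2}$ is not an $L_2$-isometry for irregular graphs, so translating the spectral bound back to a bound on $\normb{\Walk^\ell\vecto{\pi'} - \statdist}$ requires care. The natural resolution is to interpret the norm in the $\statdist^{-1}$-weighted inner product, in which the similarity by $\Degr^{\pm 1/2}$ becomes an isometry and the geometric contraction $(1-\Phi^2/2)^\ell$ carries over without loss. With this reading — which is consistent with how the bound is used in the sequel (at $\ell = \Theta(\Phi^{-2}\log n)$ the error becomes $\OO(1/n)$) — the computation above yields the claim.
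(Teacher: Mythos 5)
The paper itself offers no proof of this lemma --- it is imported verbatim (``\cf'') from Sinclair --- so there is no internal argument to compare against; your spectral decomposition is the standard route and essentially the one the cited source takes. The core of your argument is sound: the top eigencomponent cancels exactly as you compute ($c_1 = 1/\sqrt{2m}$ because $\vecto{\pi'}$ sums to one, and $\Degr^{-1/2}\statdist = \Nwalkev_1/\sqrt{2m}$), and the orthogonal complement contracts under $\Nwalk^\ell$ at rate $\Nwalkew_2^\ell$. Two loose ends remain. First, a minor one: for the \emph{lazy} chain one has $\Nwalk = \Id - \Nlap/2$, so Cheeger's bound $\lambda_2(\Nlap)\ge \Phi^2/2$ yields $\Nwalkew_i \le 1-\Phi^2/4$ for $i\ge 2$, not $1-\Phi^2/2$; laziness halves the spectral gap. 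This only perturbs the constant in the exponent and is absorbed by the choice $\ell = 40\Phi^{-2}\log n$ in \cref{alg:cond-tester}.

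The second point is the one you flag yourself, and your proposed repair is not the right one. Conjugating back by $\Degr^{1/2}$ costs a factor of at most $\max_u\sqrt{\degr{u}}\cdot\normb{\Degr^{-1/2}\vecto{\pi'}} \le 1/\sqrt{\min_v\statdist[v]} \le \sqrt{2m}$, so the statement one actually obtains in the Euclidean norm is $\normb{\Walk^\ell\vecto{\pi'}-\statdist} \le \sqrt{2m}\,(1-\Phi^2/4)^\ell$. Reinterpreting the lemma in the $\statdist^{-1}$-weighted inner product would make the conjugation an isometry, but it is inconsistent with how the bound is consumed downstream: the statistic $s_v$ computed in \cref{alg:random-walk} and analyzed in \cref{thm:approx_summand} is precisely the \emph{unweighted} quantity $\normb{\Walkex[v,\cdot]{\ell}-\statdist}^2$, and the weakness threshold $6m^{-15}$ is calibrated against it. The correct resolution is therefore to keep the Euclidean norm, carry the $\poly(m)$ factor, and observe that the slack in $\ell$ absorbs it: $\sqrt{2m}\,(1-\Phi^2/4)^{40\Phi^{-2}\log n} \ll m^{-15}$, which is all the completeness argument requires. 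In short: right approach, and you correctly located the genuine subtlety, but the fix is ``lose a polynomial factor and use the slack in the walk length,'' not ``change the norm.''
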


  \subsection{Algorithm}

  We discuss the algorithm from a global point of view instead of describing an algorithm $\mathcal{A}$ for a single vertex to provide a better explanation of the interactions between vertices.
  
  \Cref{cor:diameter} implies that if the graph has high conductance, then it has diameter $\OO(\log n)$, which we want to use as an assumption in the algorithm later. To test the diameter, we perform a BFS of depth $\OO(\log n)$ of the graph starting from an arbitrary vertex. Initially, every vertex chooses itself as root of the BFS and announces itself as root to all its neighbors. To break the symmetry between the vertices, a vertex accepts every vertex with a lower identifier than its current root as new root and forwards its messages. If the diameter is $\OO(\log n)$, a unique root has been chosen after $\OO(\log n)$ rounds and every vertex knows its parent and its children in the BFS tree. Otherwise, at least one of the remaining candidates will reject. \Vref{alg:bfs} gives a formal description of the BFS.

  From now on, assume that the diameter is $\OO(\log n)$. Using the previously computed BFS tree, we can compute the number of edges in the graph by summing up vertex degrees from the leaves to the root and transmitting this number to all vertices afterwards. \Vref{alg:sum} describes the procedure in detail.

  The key technical lemma from \citep{KalExp11} for bounded degree graphs states that if a graph is $\epsilon$-far from having conductance $\Omega(\Phi^2)$, then there exists a $\Omega(\epsilon)$-fraction of \emph{weak} vertices such that random walks starting from these vertices converge only slowly to the stationary distribution. Therefore, a sample $S \subset V$ of size $\OO(1)$ will likely contain a weak vertex.\footnote{Technically, we sample each vertex $v$ independently into $S$ with probability $\Theta(\degr{v} / \epsilon m)$. By Markov's inequality, we may reject if $S$ is much larger than its expected size.} We extend this lemma to unbounded degree graphs. Then, we perform $N = \OO(n^{100})$ random walks of length $\ell = \OO(\log n)$ starting from each of the vertices in $S$ to approximate the rate of convergence.
  
  The crucial point here is that in each round of the algorithm, we do not send the full trace of every random walk. Instead, for every origin $v \in S$, every vertex $u \in V$ only transmits the total number of random walks that are leaving it through an edge $(u,w)$ to its neighbor $w \in \neighbors{u}$. Since the size of $S$ is constant, we require $\OO(\log n)$ bits per edge to communicate this. On the other hand, this information is sufficient because we are only interested in the distribution of endpoints of the lazy random walks for every $v \in S$. \Cref{alg:random-walk} gives a formal description of this procedure. Finally, the estimated distribution of endpoints is used to approximate the distance to the stationary distribution for each $v \in S$. The whole algorithm is summarized in \cref{alg:cond-tester}.

  \begin{algorithm}
    \caption{Conductance tester}
    \label{alg:cond-tester}
    \begin{algorithmic}[1]
      \Procedure{TestConductance}{$G = (V, E)$, $n$, $\Phi$}
        \State \textsc{BFS}($G$, $6/\Phi \ln n$) \Comment{construct BFS of depth $6/\Phi \log n$, \cref{alg:bfs}}
        \If{BFS visited less than $n$ vertices}
          \Reject
        \EndIf
        \State $m \gets$ \textsc{AggegrateSum}($G$, $12/\Phi \ln n$, $f$) \Comment{$f(v) \defeq \degr{v}/2$, \cref{alg:sum}}
        \CompAllVert{$v \in V$}
        	\State with probability $\min\{1,10^4\degr{v} / 2\epsilon m\}$, mark $v$ \label{lin:choose_s}
        \EndCompAllVert
        \State $S \gets$ marked $v$, $r \gets$ root of BFS tree
        \If{$|S| > 10^5 / \epsilon$}
        	\Reject \label{lin:large-set}
       	\EndIf
       	\State \textsc{RandomWalk}($G$, $S$, $40/\Phi^2 \cdot \log n$, $n^{100}$) \Comment{compute local $s_{v,u}$, \cref{alg:random-walk}}
        \ForAll{$v \in S$}
          $s_v \gets$ \textsc{AggegrateSum}($G$, $12/\Phi \ln n$, $f$) \Comment{$f(u) \defeq s_{v,u}$, Alg.~\ref{alg:sum}}\label{lin:aggregate_sv}
        \EndFor
        \CompAllVert{$v \in V$}
          \If{$s_v \leq m^{-15}$ for all $v \in S$}
            \Accept \label{line:conductance_accept}
          \Else
            ~\Reject \label{line:conductance_reject}
          \EndIf
        \EndCompAllVert
      \EndProcedure
    \end{algorithmic}

  \end{algorithm}

  \begin{algorithm}
    \caption{Perform random walks}
    \label{alg:random-walk}
    \begin{algorithmic}[1]
      \Procedure{RandomWalk}{$G$,$S$,$\ell$,$N$}
        \CompAllVert{$v \in S$}
           \State $P_v \gets \{u_1,\cdots,u_N\}$ where each $u_i$ is chosen indep. according to $\Walk \vecto{e_v}$
          \ForAll{vertex $w \in P_v$ chosen $n_w$ times}
            send $(v,n_w)$ to $w$
          \EndFor
        \EndCompAllVert
        \For{\DistNumRoundsAllVert{$\ell$}{$v$}}
          \If{$v$ receives $(v_1,n_1), (v_2, n_2), \cdots, (v_k, n_k)$}
            \ForAll{$(v_i, n_i)$}
              \State $P_{v_i} \gets \{u_1,\cdots,u_{n_i}\}$ where $u_j$ is picked indep. according to $\Walk \vecto{e_v}$
              \ForAll{vertex $w \in P_{v_i}$ chosen $n_w$ times}
                append $(v_i, n_w)$ to $L_w$
              \EndFor
            \EndFor
            \ForAll{$w \in \neighbors{v}$}
              send $L_w$ to $w$
            \EndFor
          \EndIf
        \EndFor
        \CompAllVert{$u \in V$}
          \If{$u$ receives $(v_1,n_1), (v_2, n_2), \cdots, (v_k, n_k)$}
            \ForAll{$v \in S$}
              \State $\widehat{\Walk}^\ell_{v,u} \gets \sum_{v_i = v} n_i / N$
              \If{$\widehat{\Walk}^\ell_{v,u} \leq 2 m^{-2}$}
                \Reject \label{lin:small_W_reject}
              \EndIf
              \State $s_{v,u} \gets (\widehat{\Walk}^\ell_{v,u})^2 - \widehat{\Walk}^\ell_{v,u} \frac{\degr{v}}{2m} + \frac{\degr{v}^2}{4m^2}$
            \EndFor
          \EndIf
        \EndCompAllVert
      \EndProcedure
    \end{algorithmic}
  \end{algorithm}

  First, we show that either the estimates $\widehat{\Walk}^\ell_{v,u}$ of \cref{alg:random-walk} are good or the algorithm rejects in line~\ref{lin:small_W_reject} because $G$ has low conductance. The proof is given in the appendix.
  \begin{restatable}{lemma}{thmapproxenddist}
    \label{thm:approx_end_dist}
    Consider \cref{alg:random-walk}.
    For every $v, u \in V$, it holds with probability at least $1-m^{-10}$ that
    \begin{inenum} 
      \item $|\widehat{\Walk}^\ell_{v,u} - \Walkex[v,u]{\ell}| \leq m^{-20}$ and, conditioned on the previous, \item if $\widehat{\Walk}^\ell_{v,u} < m^{-2}$ then $G$ has conductance less than~$\Phi$.
    \end{inenum}
  \end{restatable}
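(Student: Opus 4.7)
The plan is to prove the two items separately, since item~(ii) will use item~(i) as a black box.

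For item~(i), I would fix $v \in S$ and $u \in V$ and observe that, by construction in \cref{alg:random-walk}, the $N = n^{100}$ walks started at $v$ evolve independently (at every round, each new step is sampled independently from $\Walk \vecto{e_w}$ at the current location~$w$). Hence after $\ell$ steps each walk's endpoint is distributed exactly as $\Walk^\ell \vecto{e_v}$, so $N \cdot \widehat{\Walk}^\ell_{v,u}$ is a sum of $N$ independent Bernoulli random variables with mean $\Walkex[v,u]{\ell}$. Hoeffding's inequality with deviation $t = m^{-20}$ then yields
\[
\Pr\!\left[\left|\widehat{\Walk}^\ell_{v,u} - \Walkex[v,u]{\ell}\right| > m^{-20}\right] \leq 2 \exp(-2 N m^{-40}).
\]
Since $m \leq n^2$ implies $N m^{-40} \geq n^{20}$, the right-hand side is vastly smaller than $m^{-10}$.

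For item~(ii), I would argue by contraposition: assume the event from item~(i) holds and suppose, for contradiction, that $G$ has conductance at least $\Phi$. Applying \cref{thm:high-cond-close} with starting distribution $\vecto{e_v}$ and $\ell = (40/\Phi^2) \log n$ gives
\[
\hnormb{\Walk^\ell \vecto{e_v} - \statdist} \leq (1 - \Phi^2/2)^\ell \leq e^{-20 \log n} = n^{-20}.
\]
Since positive conductance forces every vertex to have degree at least one, the standard identity $\statdist[u] = \degr{u}/(2m) \geq 1/(2m)$ applies, and bounding $L_\infty$ by $L_2$ yields $\Walkex[v,u]{\ell} \geq 1/(2m) - n^{-20}$. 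Combining with item~(i),
\[
\widehat{\Walk}^\ell_{v,u} \;\geq\; \frac{1}{2m} - n^{-20} - m^{-20} \;>\; m^{-2}
\]
for all but trivially small $n$, contradicting $\widehat{\Walk}^\ell_{v,u} < m^{-2}$.

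There is no real obstacle; the work is bookkeeping the polynomial slack so that the Hoeffding exponent dominates $\log m$ and the stationary-mass lower bound $1/(2m)$ beats the additive errors $n^{-20} + m^{-20}$, for which the algorithm's choices of $40$ (walk length), $100$ (sample count) and $20$ (accuracy exponent) leave comfortable margin. The one minor technicality is that \cref{alg:random-walk} performs one initial walk step before entering its $\ell$-round loop, but since the bound in \cref{thm:high-cond-close} is monotone in the step count, the argument is unchanged with $\ell+1$ in place of $\ell$.
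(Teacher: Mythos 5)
Your proposal is correct and follows essentially the same route as the paper: Hoeffding's inequality on the $N$ independent walk endpoints for item~(i), and for item~(ii) the observation that $\statdist[u] = \degr{u}/(2m) \geq 1/(2m)$ together with \cref{thm:high-cond-close} shows a small estimate $\widehat{\Walk}^\ell_{v,u}$ is incompatible with conductance $\Phi$ (the paper states this as a direct lower bound of $\Omega(1/m)$ on $\normb{\Walk^\ell \vecto{e_v} - \statdist}$ followed by the contrapositive of \cref{thm:high-cond-close}, which is logically identical to your contradiction argument). Your bookkeeping of the exponents is, if anything, more consistent than the paper's.
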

  \proofref{proof:approx_end_dist}
  
  Furthermore, \cref{thm:approx_end_dist} implies that the estimates $s_v$ in \cref{alg:cond-tester} (see line~\ref{lin:aggregate_sv}) are also good if \cref{alg:random-walk} has not rejected before.

  \begin{lemma}
    \label{thm:approx_summand}
    Consider \cref{alg:cond-tester}.
    With probability at least $1-m^{-8}$ it holds for every $v \in S$  in line~\ref{lin:aggregate_sv} that
    $%
      \left| \normb{\Walkex[v, \cdot]{\ell} - \statdist}^2 - s_v \right|
      \leq 3m^{-19}
    $.%
  \end{lemma}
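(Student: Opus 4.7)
The plan is to deduce this from \cref{thm:approx_end_dist} by a union bound together with a termwise comparison of $\sum_u s_{v,u}$ with the squared $\ell_2$ distance $\normb{\Walkex[v,\cdot]{\ell} - \statdist}^2$. First I would apply \cref{thm:approx_end_dist} to every pair $(v,u) \in S \times V$. Since $|S| \leq 10^5/\epsilon$ by line~\ref{lin:large-set} and $n \leq m+1$, a union bound over $|S| \cdot n = \OO(n) \leq \OO(m)$ such pairs yields, with failure probability at most $m^{-8}$, that
\begin{equation*}
  |\widehat{\Walk}^\ell_{v,u} - \Walkex[v,u]{\ell}| \leq m^{-20}
  \qquad \text{for every } v \in S, \; u \in V,
\end{equation*}
and in particular none of the pairs triggers the rejection in line~\ref{lin:small_W_reject} from a correctness failure. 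I would condition on this event for the rest of the argument.

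Next, expanding both expressions, the difference decomposes termwise as
\begin{equation*}
  \normb{\Walkex[v,\cdot]{\ell} - \statdist}^2 - s_v
  \;=\; \sum_{u \in V} \Bigl[ \bigl(\Walkex[v,u]{\ell} - \tfrac{\degr{u}}{2m}\bigr)^2 - s_{v,u} \Bigr],
\end{equation*}
since $s_{v,u}$ is constructed in \cref{alg:random-walk} to play the role of the squared deviation $(\widehat{\Walk}^\ell_{v,u} - \statdist[u])^2$. Using the factorisation $(a-p)^2 - (b-p)^2 = (a-b)(a+b-2p)$ with $a = \Walkex[v,u]{\ell}$, $b = \widehat{\Walk}^\ell_{v,u}$, and $p = \degr{u}/(2m)$, together with $|a-b| \leq m^{-20}$ from the first step, each summand is bounded in absolute value by $m^{-20}\bigl(2 \Walkex[v,u]{\ell} + 2\,\degr{u}/(2m) + m^{-20}\bigr)$.

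Summing this pointwise bound over $u \in V$ and invoking the two identities $\sum_u \Walkex[v,u]{\ell} = 1$ (since $\Walkex[v,\cdot]{\ell}$ is a probability distribution over endpoints after $\ell$ steps) and $\sum_u \degr{u}/(2m) = 1$ (stationarity), the total error telescopes down to at most $m^{-20}(2 + 2 + n \cdot m^{-20}) \leq 5 m^{-20} \leq 3 m^{-19}$, which is the bound claimed. The main obstacle is the bookkeeping: one has to verify carefully that the per-vertex quantity $s_{v,u}$ assembled in \cref{alg:random-walk} really does sum over $u$ to $\sum_u (\widehat{\Walk}^\ell_{v,u} - \degr{u}/(2m))^2$ and that \textsc{AggregateSum} on the BFS tree (\cref{alg:sum}) delivers this sum to every vertex within the CONGEST bandwidth constraint in $\OO(\log(n)/\Phi)$ rounds. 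Once these accounting details are pinned down, the bound above is the entire content of the lemma.
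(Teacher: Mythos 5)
Your proposal is correct and follows essentially the same route as the paper: apply \cref{thm:approx_end_dist} to every pair $(v,u) \in S \times V$, take a union bound, and compare $s_v = \sum_u s_{v,u}$ termwise with the expanded squared $\ell_2$ distance. The only (harmless) difference is in the final summation, where you exploit $\sum_u \Walkex[v,u]{\ell} = \sum_u \degr{u}/(2m) = 1$ to bound the total error by $\OO(m^{-20})$, while the paper bounds each of the $n \le m+1$ terms by $3m^{-20}$ using $\Walkex[v,u]{\ell} \le 1$; both yield the claimed $3m^{-19}$.
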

  \begin{proof}
    Let $v \in S$. We have the following equality for the discrepancy of the distribution of the random walks' endpoints that start at $v$ and the stationary distribution:
    \begin{equation}
      \label{eq:walk_dist_equality}
      \normb{\Walkex[v, \cdot]{\ell} - \statdist}^2 = \sum_{u \in V}\left( (\Walkex[v,u]{\ell})^2 -\Walkex[v,u]{\ell} \frac{d(u)}{2m} + \frac{d(u)^2}{4m^2} \right) \fstop
    \end{equation}
    By \cref{thm:approx_end_dist}, we know that for every $u \in V$ we have
    $%
      |\widehat{\Walk}^\ell_{u,v} - \Walkex[v,u]{\ell}| \leq m^{-20}
    $%
    with probability $1 - 1/m^9$.
    Using $\Walkex[v,u]{\ell} \leq 1$, we have
    $\left| (\Walkex[v,u]{\ell})^2 - \Walkex[v,u]{\ell} \frac{\degr{u}}{2m} + \frac{\degr{u}^2}{4m^2} - s_{v,u} \right| \leq 3m^{-20}$.
    Combining this with \cref{eq:walk_dist_equality}, a union bound over all $u \in V$ implies that with probability at least $1 - n \cdot m^{-10} \geq 1 - m^{-9}$, we have that
    $\left| \normb{\Walkex[v,\cdot]{\ell} - \statdist}^2 - \sum_{u \in V} s_{v,u} \right| \leq 3m^{-19}$.
    A union bound over all $v \in S$ gives that with probability at least $1 - |S| / m^{-9} \geq 1 - m^{-8}$, $\left| \normb{\Walkex[v,\cdot]{\ell} - \statdist}^2 - \sum_{u \in V} s_{v,u} \right| \leq 3m^{-19}$.
  \end{proof}

  \subsection{Completeness and Soundness}

  The proof of completeness is a straightforward application of the results from the previous section.

  \begin{lemma}[Completeness]
    \label{thm:completeness}
    Let $G(V,E)$ be a graph with conductance at least~$\Phi$. Then, with probability at least $2/3$, each vertex in $G$ returns $\Accept$ when it runs \cref{alg:cond-tester}.
  \end{lemma}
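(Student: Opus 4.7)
The plan is to walk through the possible rejection points of \cref{alg:cond-tester} and bound the failure probability at each one, then take a union bound. There are four such points: the BFS might visit fewer than $n$ vertices, the sample $S$ might exceed $10^5/\epsilon$, \cref{alg:random-walk} might reject in line~\ref{lin:small_W_reject} because some $\widehat{\Walk}^\ell_{v,u}$ is too small, and the final test in line~\ref{line:conductance_reject} might trigger because some $s_v > m^{-15}$.

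For the BFS step, \cref{cor:diameter} gives that the diameter is at most $(3/\Phi) \ln m \le 6/\Phi \ln n$, so the BFS of that depth reaches all vertices and the first rejection is avoided deterministically. For the size of $S$, the expected size is at most $\sum_{v \in V} 10^4 \degr{v}/(2\epsilon m) = 10^4/\epsilon$, so by Markov's inequality $|S| > 10^5/\epsilon$ with probability at most $1/10$.

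For the random-walk rejection, I would use \cref{thm:high-cond-close} with $\ell = (40/\Phi^2)\log n$ to bound $\hnormb{\Walk^\ell \vecto{e}_v - \statdist} \le (1-\Phi^2/2)^\ell \le e^{-20 \ln n} = n^{-20}$, which implies componentwise $\Walkex[v,u]{\ell} \ge \statdist[u] - n^{-20} = \degr{u}/(2m) - n^{-20} \ge 1/(2m) - n^{-20}$. Combined with the first clause of \cref{thm:approx_end_dist}, $\widehat{\Walk}^\ell_{v,u} \ge 1/(2m) - n^{-20} - m^{-20} > 2m^{-2}$ for $m$ not tiny (the boundary cases are handled separately, since a graph with very few edges can be tested trivially). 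A union bound over the $O(|S|\cdot n)$ such comparisons costs at most $n |S| \cdot m^{-10} \le m^{-8}$ in failure probability.

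For the final test, \cref{thm:approx_summand} tells us that with probability at least $1-m^{-8}$, for every $v \in S$ we have $s_v \le \hnormb{\Walkex[v,\cdot]{\ell} - \statdist}^2 + 3m^{-19}$. Using \cref{thm:high-cond-close} again, $\hnormb{\Walkex[v,\cdot]{\ell} - \statdist}^2 \le n^{-40}$, so $s_v \le n^{-40} + 3m^{-19} \le m^{-15}$ for any reasonable input size. Hence the accept-branch in line~\ref{line:conductance_accept} is taken at every vertex. Summing all failure probabilities yields at most $1/10 + m^{-8} + m^{-8} \le 1/3$, giving the claimed two-sided guarantee.

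The only mild subtlety is that the acceptance decision is a conjunction over all vertices, so one must ensure that $s_v$ (which is computed globally via \textsc{AggregateSum}) is the \emph{same} value at every vertex; this is immediate from the correctness of the aggregation primitive, so the conjunction over vertices collapses to a single event per $v \in S$. No large deviation inequality beyond Markov and the already-established \cref{thm:approx_end_dist,thm:approx_summand,thm:high-cond-close} is needed, so the main work is just bookkeeping the failure probabilities.
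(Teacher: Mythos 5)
Your proof is correct and follows essentially the same route as the paper's: Markov's inequality for the size of $S$, \cref{thm:high-cond-close} with $\ell = (40/\Phi^2)\log n$ to bound the true discrepancy by $m^{-20}$, and \cref{thm:approx_summand} to transfer this to $s_v$, followed by a union bound. In fact you are somewhat more thorough than the paper, which silently skips the BFS step and the possible rejection in line~\ref{lin:small_W_reject} of \cref{alg:random-walk}; your componentwise argument that $\widehat{\Walk}^\ell_{v,u} \geq 1/(2m) - n^{-20} - m^{-20} > 2m^{-2}$ is exactly the missing justification for the latter.
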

  \begin{proof}
    The probability that the algorithm rejects in Line~\ref{lin:large-set} of \cref{alg:cond-tester} is at most $1/10$, and we assume, for the remainder of the proof, that this event did not occur.
    If $G$ has conductance at least $\Phi$, then from \cref{thm:high-cond-close} we know that $\normb{\Walk^\ell(\cdot,v) - \statdist}^2 \leq \left( 1 - \Phi^2/2 \right)^{2\ell} \leq \exp(- \Phi^2\ell/2 ) \leq m^{-20}$ for every vertex $v$.
    \Cref{thm:approx_summand} implies that with probability at least $9/10$, it holds that $\left| \normb{\Walkex[\cdot, v]{\ell} - \statdist}^2 -  s_v \right| \leq 3m^{-19}$.
    Conditioning on this event, every vertex accepts in line~\ref{line:conductance_accept} of \cref{alg:cond-tester}.
  \end{proof}

  To complete the analysis of the tester, we show that whenever the graph is $\epsilon$-far from having conductance $\Omega(\Phi^2)$, the tester rejects with probability at least $2/3$. To this end, we actually show that if the volume of weak vertices is small, then the graph can be converted to another graph $G'$ by modifying at most $\epsilon m$ edges such that the conductance is $\Omega(\Phi^2)$. The idea of the analysis is due to \citet{KalExp11}, who analyzed a classic property tester for testing expansion in graphs with vertex degrees bounded by a constant. We deviate from their analysis where it becomes necessary to take care of arbitrary vertex degrees.
  
  Let a vertex $v \in V$ be called \emph{weak} if $\normb{\Walk^\ell(v,\cdot) - \statdist} > 6m^{-15}$. The following lemma states that if there exists a set of vertices~$S$ with small conductance, then there exists a set of weak vertices~$T$ whose volume is at least a constant fraction of the volume of~$S$. We defer the proof of this technical lemma to the appendix.

  \begin{restatable}{lemma}{thmspeccorollary}
    \label{thm:spec_corollary}
    Let $S \subset V$ be such that $\vol{S} \le \vol{\bar{S}}$ and $\setcond{S} \leq \delta$.
    Then, for any $\ell \in \setn$ and any $0 < \theta \leq 1/10$, there exists a set $T \subseteq S$  such that $\vol{T} \geq \theta \vol{S}$ and for every $v \in T$, it holds that
    $\normb{\Walk^\ell(v,\cdot) - \statdist}^2 > \frac{1}{80 m^7} (1 - 4 \delta)^{2 \ell}$.
  \end{restatable}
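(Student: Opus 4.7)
The plan is to turn the hypothesis $\setcond{S} \leq \delta$ into a spectral lower bound for $\Nwalk$, and then extract pointwise lower bounds on $\hnormb{\Walkex[v, \cdot]{\ell} - \statdist}^2$ for many $v \in S$ through a Markov-style averaging argument.

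I would begin by introducing the unit vector $\vecto{h} \in \setr^V$ given by $\vecto{h}(v) = \sqrt{d(v)/\vol{S}}$ for $v \in S$ and $0$ elsewhere. A direct Dirichlet-form computation yields $\hinnerprod{\vecto{h}}{\Nwalk \vecto{h}} = 1 - \setcond{S}/2 \geq 1 - \delta/2$. Expanding $\vecto{h} = \sum_i \alpha_i \Nwalkev_i$ in the orthonormal eigenbasis of $\Nwalk$ (with top eigenvector $\Nwalkev_1 = \sqrt{\statdist}$), the top coefficient equals $\alpha_1 = \sqrt{\vol{S}/(2m)} \leq 1/\sqrt{2}$ because $\vol{S} \leq \vol{\bar{S}}$. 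The orthogonal residual $\vecto{h} - \alpha_1 \Nwalkev_1$ therefore has squared $\ell_2$-norm at least $1/2$ and normalized Rayleigh quotient at least $1 - \delta$, and applying Jensen's inequality to the convex map $x \mapsto x^{2\ell}$ on the probability distribution $(\alpha_i^2/(1-\alpha_1^2))_{i \geq 2}$ gives
\begin{equation*}
  \sum_{i \geq 2} \alpha_i^2 \Nwalkew_i^{2\ell} \geq \frac{(1-\delta)^{2\ell}}{2}.
\end{equation*}

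Next, I would establish the spectral identity
\begin{equation*}
  \sum_{u \in V} \frac{(\Walkex[v, u]{\ell} - \statdist(u))^2}{\statdist(u)} = \frac{2m}{d(v)} \sum_{i \geq 2} \Nwalkev_i(v)^2 \Nwalkew_i^{2\ell}
\end{equation*}
via the similarity $\Walk = \Degr^{1/2} \Nwalk \Degr^{-1/2}$ and the orthonormality of the eigenbasis. Summing against $d(v)$ for $v \in S$, swapping the order of summation, and applying Cauchy--Schwarz in the form $\sum_{v \in S} \Nwalkev_i(v)^2 \geq \alpha_i^2$ (which follows from $\alpha_i = \hinnerprod{\vecto{h}}{\Nwalkev_i}$) bounds the resulting double sum from below by $2m \sum_{i \geq 2} \alpha_i^2 \Nwalkew_i^{2\ell} \geq m(1-\delta)^{2\ell}$. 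Converting from the $\statdist^{-1}$-weighted norm to the unweighted $\ell_2$ norm via $\statdist(u) \geq 1/(2m)$ for each $u$ with $d(u) \geq 1$ yields the average bound $\sum_{v \in S} d(v) \hnormb{\Walkex[v,\cdot]{\ell} - \statdist}^2 \geq (1-\delta)^{2\ell}/2$.

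Finally, a Markov-style averaging argument, combined with the uniform upper bound $\hnormb{\Walkex[v, \cdot]{\ell} - \statdist}^2 \leq 2$ and the sharper per-vertex spectral bound $d(v) \sum_u (\Walkex[v,u]{\ell} - \statdist(u))^2/\statdist(u) \leq 2m$, extracts a subset $T \subseteq S$ with $\vol{T} \geq \theta \vol{S}$ on which the pointwise threshold $(1-4\delta)^{2\ell}/(80 m^7)$ is exceeded. The slack $1/(80 m^7)$ in the target and the loosening of the exponential base from $1 - \delta$ to $1 - 4\delta$ leave ample room to absorb the polynomial losses from the Cauchy--Schwarz and the norm conversion. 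The main technical hurdle is this final extraction: a naive Markov inequality produces a volume fraction scaling with $(1-\delta)^{2\ell}/\vol{S}$, which can be polynomially small when $\ell$ is large, whereas the lemma asks for a constant fraction $\theta \leq 1/10$. Bridging this gap requires exploiting the per-vertex spectral bound carefully together with the wide quantitative gap between the summed lower bound $(1-\delta)^{2\ell}$ and the much smaller per-vertex target $(1-4\delta)^{2\ell}/m^7$.
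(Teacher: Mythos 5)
Your first half — the spectral expansion of $\Degr^{1/2}\vecto{1}_S$, the bound $\alpha_1^2 \leq \vol{S}/2$ from $\vol{S}\le\vol{\bar S}$, the Rayleigh-quotient bound from $\setcond{S}\le\delta$, and the resulting lower bound on the \emph{volume-weighted average} of $\normb{\Walk^\ell(v,\cdot)-\statdist}^2$ over $S$ — is essentially the paper's Lemma~\ref{thm:exists_one_weak_vertex} (modulo sketch-level constants; your displayed identity is off by a factor of $(2m)^2$, which the $m^{-7}$ slack indeed absorbs). The problem is the step you yourself flag as "the main technical hurdle": it is a genuine gap, and the ingredients you propose for bridging it cannot close it. An average lower bound of order $(1-\delta)^{2\ell}/\poly(m)$ gives no control on the \emph{volume} of the set of above-threshold vertices: the entire discrepancy mass could sit on a single vertex of degree $1$ (note $d(v)\normb{\Walk^\ell(v,\cdot)-\statdist}^2$ can be as large as $2d(v)$, vastly exceeding the required total), so Markov yields a set of volume $O(1)$, not $\theta\vol{S}$. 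Neither the per-vertex upper bound nor the gap between $(1-\delta)^{2\ell}$ and $(1-4\delta)^{2\ell}/m^7$ helps, because the obstruction is in the volume, not in the threshold.

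The paper closes this gap with an idea absent from your proposal: a \emph{deletion-robust} version of the averaging argument (Lemma~\ref{thm:exists_set_weak_vertices}). It shows that for \emph{any} $T\subseteq S$ with $\vol{T}=(1-\theta)\vol{S}$, $\theta\le 1/10$, a weak vertex still exists in $T$; the proof reruns your spectral computation with $\vecto{u}_T=\Degr^{1/2}\vecto{1}_T$ in place of $\vecto{u}_S$ and uses $\sum_i(\alpha_i-\beta_i)^2=\normb{\vecto{u}_S-\vecto{u}_T}^2=\theta\vol{S}$ to show the coefficient mass on eigenvalues above $1-4\delta$ (orthogonal to $\Nwalkev_1$) drops only from $\vol{S}/4$ to $\vol{S}/20$ — this perturbation step is, incidentally, where the loosening from $1-\delta$ to $1-4\delta$ is actually consumed, not in absorbing polynomial losses. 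Lemma~\ref{thm:spec_corollary} then follows by iteration: mark the weak vertex found, swap it out of $T$ for an unmarked vertex of $S$, and repeat; as long as the marked set has volume below $\theta\vol{S}$, the robust lemma applies again and produces a fresh weak vertex. You would need to add both the perturbation lemma and this mark-and-exchange iteration to make your argument work.
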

  \proofref{proof:thm_spec_corollary}

  We can use \cref{thm:spec_corollary} to separate weak vertices from the remaining graph.

  \begin{lemma}
    Let $G=(V,E)$ be a graph.
    If the volume of weak vertices in $G$ is at most $(1/100) \epsilon m$, then there is a partition of $V$ into $P \cup \bar{P}$ such that $\vol{P} \leq \epsilon m / 10$ and $\graphcond{G[\bar{P}]} \geq \Phi^2 / 256$.
    \label{thm:small-volume-weak}
  \end{lemma}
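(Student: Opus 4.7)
The plan is to argue by contraposition: assuming that for \emph{every} $P \subseteq V$ with $\vol{P} \leq \epsilon m / 10$ we have $\graphcond{G[\bar P]} < \Phi^2/256$, I will derive that the total volume of weak vertices strictly exceeds $\epsilon m/100$, contradicting the hypothesis.

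First, I would build a chain $\emptyset = P_0 \subseteq P_1 \subseteq \cdots$ iteratively: at step $i$, as long as $\vol{P_i} \leq \epsilon m / 10$, the hypothesis supplies $\graphcond{G[\bar P_i]} < \Phi^2/256$, so there exists a nontrivial cut $(S_i, \bar P_i \setminus S_i)$ in $G[\bar P_i]$ with $\mathrm{vol}_{G[\bar P_i]}(S_i) \leq \mathrm{vol}_{G[\bar P_i]}(\bar P_i \setminus S_i)$ and conductance (inside $G[\bar P_i]$) strictly below $\Phi^2/256$; set $P_{i+1} \defeq P_i \cup S_i$ and continue.

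The crux is to promote this local guarantee to a low-conductance cut of the same quality \emph{in the original graph~$G$}, so that \cref{thm:spec_corollary} applies directly to $P_{i+1}$. Writing $a_j \defeq \mathrm{vol}_{G[\bar P_j]}(S_j)$, $b_j \defeq |\edgecut{E_G}{S_j}{P_j}|$ and $c_j \defeq |\edgecut{E_{G[\bar P_j]}}{S_j}{\bar P_j \setminus S_j}|$, a direct telescoping should give
\[
  |\edgecut{E_G}{P_{i+1}}{V \setminus P_{i+1}}| \;=\; \sum_{j \leq i}(c_j - b_j)
  \quad \text{and} \quad
  \vol{P_{i+1}} \;=\; \sum_{j \leq i}(a_j + b_j).
\]
Combined with the per-step bound $c_j < (\Phi^2/256)\, a_j$, this yields $\setcond{P_{i+1}} < \Phi^2/256$ in $G$, provided $\vol{P_{i+1}} \leq m$ so that $P_{i+1}$ is the smaller side of its cut.

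Since $V$ is finite, the process terminates, and by the contrapositive assumption it cannot stop because $G[\bar P_i]$ attains high conductance; hence there is a first index $i^*$ with $\vol{P_{i^*}} > \epsilon m/10$. Applying \cref{thm:spec_corollary} to $P_{i^*}$ in $G$ with $\delta = \Phi^2/256$ and $\theta = 1/10$ produces $T \subseteq P_{i^*}$ of volume at least $\vol{P_{i^*}}/10 > \epsilon m/100$, satisfying $\normb{\Walkex[v,\cdot]{\ell} - \statdist}^2 > (1-\Phi^2/64)^{2\ell}/(80 m^7)$ for every $v \in T$. Plugging in $\ell = (40/\Phi^2)\log n$ from \cref{alg:cond-tester} and using $1-x \geq e^{-2x}$ for small $x$, the right-hand side is bounded below by roughly $n^{-5/2}/(80 m^7)$, which comfortably exceeds $36 m^{-30}$, so every $v \in T$ is indeed weak. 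The main obstacle I foresee is the telescoping bookkeeping that lifts a per-step $G[\bar P_i]$-cut to a $G$-cut of comparable conductance; a secondary nuisance is the single step crossing the $\epsilon m/10$ threshold, which could in principle push $\vol{P_{i^*}}$ slightly past $m$. In that case I would instead apply \cref{thm:spec_corollary} to the complementary side $V \setminus P_{i^*}$, whose $G$-conductance remains $O(\Phi^2)$ and whose volume is still $\Omega(m)$, so that the same contradiction follows after a constant-factor adjustment.
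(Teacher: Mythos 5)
Your proposal is correct and follows essentially the same route as the paper: iteratively peel off low-conductance cuts from $G[\bar P_i]$, telescope to show the accumulated set $P$ has conductance below $\Phi^2/256$ in $G$ itself, and then invoke \cref{thm:spec_corollary} with $\theta = 1/10$ to exhibit more than $\epsilon m/100$ volume of weak vertices, contradicting the hypothesis. Your explicit $a_j, b_j, c_j$ bookkeeping and the handling of the step that overshoots $\vol{P} = m$ are just more careful renditions of details the paper's proof treats implicitly (the paper simply halts the process before $\vol{P_{i+1}} \le \vol{\bar P_{i+1}}$ would fail).
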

  \begin{proof}
    We partition the graph recursively into two sets $(P, \bar{P})$.  At the beginning, $P_0 =
    \emptyset$ and $\bar{P}_0 = V$.  As long as there is a cut $(C_i, \bar{C}_i)$ in $\bar{P}_{i-1}$
    in step $i$ with $\vol{C_i} \le \vol{\bar{C}_i}$ and $\edgecut{E}{C_i}{\bar{C}_i} / \vol{C_i} \leq \Phi^2/256$, we set $P_i = P_{i-1} \cup C_i$ and $\bar{P}_i = V \minus P_i$. We continue this until
    we don't find such a cut or the condition $\vol{P_{i+1}} \le \vol{\bar{P}_{i+1}}$ would be violated. The number of edges going across the cut $(P,\bar{P})$ is at most $\sum_i |\edgecut{E}{C_i}{\bar{C}_i}|$. Therefore, $|\edgecut{E}{P}{\bar{P}}| \leq \tfrac{\Phi^2}{256} \sum_i \vol{C_i} \leq \tfrac{\Phi^2}{256} \vol{P}$.

    Now, assume that $\vol{P} > (1/10) \epsilon m$. 
    \Cref{thm:spec_corollary} implies that there exists $P' \subseteq P$ such that $\vol{P'} \geq \frac{1}{10} \vol{P} > \epsilon m / 100$ (where $\theta = 1/10$) and for all $v \in P'$ we have
      $\normb{\Walk^\ell(v, \cdot) - \statdist}
      > \frac{1}{80m^7} (1 - 4 \Phi^2 / 256)^{2 \ell}
      > \frac{1}{80m^{10}}$.
      This means that $P'$ contains only weak vertices and has volume at least $\epsilon m/100$, which contradicts our assumption that the volume of weak vertices in $G$ is at most $\epsilon m/100$. Therefore, $\vol{P} \le \epsilon m/10$ when the partitioning terminates.
      Hence $\graphcond{G[\bar{P}]} \geq \Phi^2/256$.
  \end{proof}

  Finally, the following lemma states that few edge modifications in a graph with separated weak vertices are sufficient to make it a graph with high conductance.

  \begin{lemma}[{\citep[Lemma 9]{LiTes15}}]
    \label{thm:patch_graph}
    Let $G=(V,E)$ be a graph.
    If there exists a set $P \subseteq V$ such that $\vol{P} \leq \epsilon m / 10$ and the subgraph
    $G[V \minus P]$ is a $\Phi'$-expander, then there exists an algorithm that modifies at most
    $\epsilon m$ edges to get a $\Phi' / 3$-expander $G' =(V, E')$.
    \label{thm:close}
  \end{lemma}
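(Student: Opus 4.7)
The plan is to construct $G'$ explicitly by surgery on $P$ and then verify the conductance bound cut-by-cut, using the expander property of $G[V \setminus P]$ as a black box.

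First, I would \emph{isolate} $P$: delete every edge of $G$ that has at least one endpoint in $P$. This removes at most $\vol{P} \leq \epsilon m / 10$ edges and leaves $G[V \setminus P]$ untouched. Second, I would \emph{reconnect} $P$ to the expander part: for each $v \in P$, re-add $\degr{v}$ new edges from $v$ to vertices in $V \setminus P$, choosing endpoints according to the degree distribution of $V \setminus P$ (so $u \in V \setminus P$ gets edges with probability proportional to $\degr[G]{u} / \vol{V \setminus P}$). Multi-edges can be tolerated, or fixed by a standard degree-preserving swap at negligible cost. This introduces at most another $\vol{P} \leq \epsilon m / 10$ edges, so the total number of edge modifications is at most $\epsilon m / 5 \leq \epsilon m$.

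Next I would verify that $G'$ is a $\Phi'/3$-expander. Fix any cut $(A, \bar A)$ with $\vol_{G'}(A) \leq \vol_{G'}(\bar A)$, and write $A_0 = A \cap (V \setminus P)$, $A_P = A \cap P$, and analogously for $\bar A$. Since edges inside $V \setminus P$ were preserved, $|E_{G'}(A, \bar A)|$ is at least $|E_{G[V \setminus P]}(A_0, \bar A_0)|$, which by the $\Phi'$-expander hypothesis is at least $\Phi' \min\{\vol_{G[V \setminus P]}(A_0), \vol_{G[V \setminus P]}(\bar A_0)\}$. The new edges incident to $P$ contribute further: because endpoints in $V \setminus P$ were chosen proportionally to degree, the fraction of new edges from $A_P$ landing in $\bar A_0$ is essentially $\vol_{G[V\setminus P]}(\bar A_0) / \vol{V \setminus P}$, which is a constant fraction whenever $A_0$ is not nearly all of $V \setminus P$. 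A short case analysis (balanced cut inside the expander vs.\ imbalanced cut where most of $A$ lies in $P$ or all of $V \setminus P$ is on one side) then yields $|E_{G'}(A, \bar A)| \geq (\Phi'/3)\vol_{G'}(A)$; the factor $3$ absorbs the loss from the $\vol{P}/\vol{V \setminus P}$ terms, using $\vol{P} \leq \epsilon m / 10 \leq \tfrac{1}{2}\vol{V \setminus P}$ for small $\epsilon$.

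The main obstacle is the imbalanced regime where $A_0$ is close to all of $V \setminus P$, so the expander inequality inside $V \setminus P$ gives no useful bound on $|E_{G[V\setminus P]}(A_0, \bar A_0)|$. Here the argument must rely entirely on the fresh edges: the bulk of $\vol_{G'}(A)$ is contributed by $V \setminus P$, and since almost all new edges out of $\bar A_P$ land in $A_0$ (the dominant side of the expander), those edges alone give a cut of size $\Omega(\Phi' \vol_{G'}(\bar A))$ on the smaller side. Handling this degenerate case cleanly, while simultaneously keeping the constants tight enough to reach $\Phi'/3$, is where the bookkeeping is most delicate.
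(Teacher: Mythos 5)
First, a remark on scope: the paper does not prove this lemma at all --- it imports it verbatim as Lemma~9 of \citep{LiTes15} --- so there is no in-paper proof to compare against and your attempt has to stand on its own. Your surgery (delete the at most $\vol{P}$ edges meeting $P$, then reattach each $v \in P$ by $\degr{v}$ fresh edges into $V \minus P$) is the right construction and matches the cited source in spirit, and your count of at most $2\vol{P} \leq \epsilon m/5$ modifications is fine.

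The genuine gap is in the reattachment step. You choose the new endpoints \emph{randomly} in proportion to degree and then assert that the fraction of new edges from $A_P$ landing in $\bar{A}_0$ is ``essentially'' $\mathrm{vol}(\bar{A}_0)/\mathrm{vol}(V \minus P)$. That is only an in-expectation statement for one fixed cut, but the cut $(A,\bar{A})$ is chosen adversarially \emph{after} the randomness is realized and there are $2^n$ cuts; a union bound would require concentration that simply is not available when $\mathrm{vol}(A_P)$ is small (a handful of new edges does not concentrate). Concretely, let $A_0$ be exactly the set of endpoints of the new edges leaving $A_P$: then no new $A_P$-edge crosses the cut, and if the random assignment happened to pile $t \gg \degr{u}$ new edges onto a few low-degree vertices $u$, the only crossing edges left are the $\Phi'\cdot \mathrm{vol}_{G[V\minus P]}(A_0)$ guaranteed by the expander hypothesis, which can be far smaller than the required $(\Phi'/3)\bigl(\mathrm{vol}(A_P)+\mathrm{vol}_{G'}(A_0)\bigr)$. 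What the argument actually needs is the \emph{deterministic} guarantee that every $u \in V\minus P$ absorbs at most $O(\degr{u})$ new edges; since $\vol{P} \leq \epsilon m/10$ is much smaller than $\mathrm{vol}(V\minus P)$ this can be arranged greedily, and it yields ``(new edges absorbed by $A_0$) $\leq O(\mathrm{vol}_{G[V\minus P]}(A_0))$'' for \emph{every} $A_0$ simultaneously, which is exactly the inequality your case analysis is missing. By contrast, the ``imbalanced regime'' you single out as the main obstacle is actually the easy case: if $\mathrm{vol}_{G[V\minus P]}(A_0) > \mathrm{vol}_{G[V\minus P]}(\bar{A}_0)$, then because $A$ is the smaller side of $G'$ and $\vol{P}$ is tiny, one gets $\mathrm{vol}_{G[V\minus P]}(\bar{A}_0) \geq \mathrm{vol}_{G'}(A) - 2\vol{P} \geq \mathrm{vol}_{G'}(A)/3$, and applying the expander bound with $\bar{A}_0$ as the small side finishes that case without any reference to the new edges.
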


  Combining the results on the separation of weak vertices and patching the graph (\cref{thm:spec_corollary,thm:small-volume-weak,thm:patch_graph}) and approximating the endpoint distribution (\cref{thm:approx_end_dist,thm:approx_summand}), we prove the soundness of the algorithm.

  \begin{lemma}[Soundness]
    Let $G(V,E)$ be a graph. If $G$ is $\epsilon$-far from having conductance at least $\Phi^2 / 768$, then, with probability at least $2/3$, each vertex in $G$ returns $\Reject$ when it runs \cref{alg:cond-tester}.
  \end{lemma}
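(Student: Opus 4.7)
The plan is to argue the contrapositive: if the algorithm accepts with probability greater than $1/3$, then $G$ is in fact $\epsilon$-close to a graph of conductance at least $\Phi^2/768$. First, I would collect the ``bad events'' that can make the algorithm err in favor of accepting and bound them by a small constant each: (a) the sample $S$ is too large and triggers rejection in line~\ref{lin:large-set} — handled by Markov since $\mathbb{E}[|S|]\le 10^4/\epsilon$; (b) the estimator bound of \cref{thm:approx_summand} fails for some $v\in S$; (c) the estimator bound of \cref{thm:approx_end_dist} fails; and (d) $S$ fails to hit a weak vertex. Outside of these events, the rejection criterion reduces, up to a $3m^{-19}$ slack, to the condition $\normb{\Walkex[v,\cdot]{\ell}-\statdist}^2 > m^{-15}$ for some $v\in S$, i.e.\ to $S$ containing a weak vertex.

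The structural part is then a direct chain of the lemmas already set up. Suppose for contradiction that the volume of weak vertices is at most $\epsilon m/100$. \Cref{thm:small-volume-weak} produces a partition $V=P\cupdot\bar P$ with $\vol{P}\le\epsilon m/10$ such that $G[\bar P]$ has conductance at least $\Phi^2/256$. \Cref{thm:patch_graph} then furnishes a modification of at most $\epsilon m$ edges that turns $G$ into a $\Phi^2/768$-expander, contradicting the $\epsilon$-farness hypothesis. Consequently, $\vol{\{v\in V: v \text{ weak}\}} > \epsilon m/100$, which is the assumption I need for the sampling step.

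Next I would handle the sampling calculation. Each vertex $v$ is independently marked with probability $\min\{1,10^4\degr{v}/(2\epsilon m)\}$, so the expected number of weak vertices in $S$ is at least $10^4 \cdot \vol{\text{weak}}/(2\epsilon m)\ge 50$. Treating the indicators as independent Bernoullis, a Chernoff bound shows that $S$ misses every weak vertex with probability $e^{-\Omega(1)}$, which I would tune to be at most $1/10$. Combining this with the other bad-event probabilities (each set to $1/10$ by the same style of union-bound/Markov arguments used in \cref{thm:completeness}), I get that with probability at least $2/3$ some $v\in S$ is weak, the estimate $s_v$ is within $3m^{-19}$ of $\normb{\Walkex[v,\cdot]{\ell}-\statdist}^2$, and hence $s_v > m^{-15}$, so every vertex executes the rejecting branch in line~\ref{line:conductance_reject}.

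The main obstacle I foresee is purely bookkeeping rather than conceptual: I need the threshold ``weak'' ($\normb{\Walk^\ell(v,\cdot)-\statdist}$ is large) supplied by \cref{thm:spec_corollary} (via \cref{thm:small-volume-weak}) to comfortably exceed the slack $3m^{-19}$ of the estimator and to land above the algorithmic threshold $m^{-15}$. Concretely, for $\ell=(40/\Phi^2)\log n$ and $\delta=\Phi^2/256$ the factor $(1-4\delta)^{2\ell}$ is only polynomially small in $n$, so $\normb{\Walk^\ell(v,\cdot)-\statdist}^2 \ge 1/(80m^{10})$ safely dominates $m^{-15}+3m^{-19}$. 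Once these constants are lined up, the soundness bound assembles by a final union bound over the four bad events listed above, yielding the claimed $2/3$ rejection probability.
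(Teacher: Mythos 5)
Your proposal is correct and follows essentially the same route as the paper's own proof: derive that the weak-vertex volume exceeds $\epsilon m/100$ via \cref{thm:small-volume-weak} and \cref{thm:patch_graph}, show the degree-proportional sample hits a weak vertex with probability at least $9/10$, and combine with the estimator guarantees of \cref{thm:approx_end_dist,thm:approx_summand} by a union bound. Your explicit check that the spectral lower bound $1/(80m^{10})$ on $\normb{\Walk^\ell(v,\cdot)-\statdist}^2$ clears the algorithmic threshold $m^{-15}+3m^{-19}$ is in fact slightly more careful than the paper's write-up, and the only (harmless) extra item is treating a rejection in line~\ref{lin:large-set} as a bad event, which for soundness is simply a correct rejection.
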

  \begin{proof}
    First we note that if the volume of weak vertices is less than $\epsilon m/100$, then by \cref{thm:small-volume-weak,thm:close}, the graph is $\epsilon$-close to having conductance at least $\Phi^2 / 768$. Therefore, the volume of weak vertices is at least $\epsilon m/100$. Each vertex $v$ is contained in $S$ with probability $\Theta(\degr{v} / \epsilon m)$. Hence, the expected number of weak vertices that are present in the sample $S$ is at least $100$. Therefore, with probability at least $9/10$, at least one weak vertex is sampled in $S$.

    If $\Walkex[v,u]{\ell} < m^{-2}$ for some $v \in S, u \in V$, then with probability at least $9/10$, $\widehat{\Walk}^\ell_{v,u} < 2 m^{-2}$ by \cref{thm:approx_end_dist}. In this case, the algorithm will reject in line~\ref{lin:small_W_reject} of \cref{alg:random-walk}.
    If $\Walkex[v,u]{\ell} \geq m^{-2}$ for all $v \in S, u \in V$, then with probability at least $9 / 10$, it holds that $\left| \normb{\Walkex[v,\cdot]{\ell} - \statdist} - s_v \right| \leq 3m^{-19}$ for every $v \in S$ by \cref{thm:approx_summand}. Since at least one vertex $v \in S$ is weak, \ie, $\normb{\Walk^\ell(v,\cdot) - \statdist} > 6m^{-15}$, the algorithm rejects in line~\ref{line:conductance_reject} of \cref{alg:cond-tester}.
  \end{proof}

  \subsection{Unknown Size of the Graph}
  \label{sec:unknown-size-graph}

  We describe how to get rid of the assumption that the size $n$ of the graph $G$ is known to the tester if $G$ is connected. Note that without any prior knowledge of $G$, no distributed tester can distinguish between a graph with conductance $\Phi$ and two distinct copies of it (the latter graph has conductance~$0$ and is $\epsilon$-far from being a graph with conductance $\Phi^c$ for $\epsilon < \Phi^c / 2$, $c \geq 1$).

  First, we describe a slightly simpler version of the final algorithm. In the setting of the simpler algorithm, we mark a single vertex that will initiate the test and will also give the final answer of the tester. We call this vertex the maintainer (of the graph). The algorithm can be easily adapted to the CONGEST model.

  Let $v \in V$ be a fixed vertex. The algorithm either makes $n$ available at all vertices and runs \cref{alg:cond-tester} afterwards or $v$ rejects because $G$ does not have conductance $\Phi$. If $G$ has conductance $\Phi$, the algorithm never rejects.

  We start with an initial set $S = \{ v \}$ that is grown in two phases. In the first phase, we extend $S$ to $S \cup \neighbors{S}$ as long as $\setcond{S} \geq \Phi$. In particular, $v$ starts a BFS and in every round, the vertices in the last level report their degree and the number of neighbors outside of $S$ to their parents. Similar to \cref{alg:cond-tester}, these are aggregated and sent to $v$ along the edges of the BFS tree. If $\setcond{S} < \Phi$ for the first time, the algorithm proceeds to the second phase. It continues the BFS for $- \log(\vol{S}) / \log (1-\Phi)$ rounds and stops. If any vertex in the graph notices a neighbor that is not in $S$ after these rounds, then $S \neq V$ and the algorithm rejects. Otherwise, we have obtained the value of $n = |S|$ that can be sent to all vertices, and we continue by executing \cref{alg:cond-tester}.

  \begin{restatable}{lemma}{thmunknownn}
    \label{thm:unknown_n}
    Let $G=(V,E)$ be a graph and $\Phi \in [0,1]$. There is an algorithm that computes $n$ if $G$ has conductance at least $\Phi$. 
    Otherwise, it either computes $n$ or rejects.
    The round complexity is $\OO(\log m / \log (1-\Phi))$.
  \end{restatable}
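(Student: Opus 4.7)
The plan is to analyze the two-phase algorithm described just above and verify three things: (a) if $G$ has conductance at least $\Phi$, then Phase~2 terminates with $S = V$, so $n = |S|$ can be aggregated at the maintainer $v$; (b) otherwise, either the same holds, or the final test finds a neighbor outside $S$ and rejects; (c) both phases (together with the aggregation of $\vol{S}$ and $|\edgecut{E}{S}{\bar{S}}|$ along the BFS tree) fit into $\OO(\log m / \log(1-\Phi))$ rounds.

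For Phase~1, assume $G$ has conductance at least $\Phi$. The key observation is that as long as $\vol{S_t} \leq \vol{\bar{S}_t}$, the conductance hypothesis directly gives $\setcond{S_t} \geq \Phi$, so Phase~1 does not terminate at step $t$. Since every edge in $\edgecut{E}{S_t}{\bar{S}_t}$ contributes at least one unit to the degree of a vertex newly added to $S_{t+1}$ through the BFS expansion, one gets $\vol{S_{t+1}} \geq \vol{S_t} + |\edgecut{E}{S_t}{\bar{S}_t}| \geq (1+\Phi)\vol{S_t}$. Hence within $\OO(\log m / \log(1+\Phi))$ rounds, $\vol{S_t}$ exceeds $m$, and Phase~1 must end with $\vol{\bar{S}} < m$.

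For Phase~2, since $\vol{\bar{S}} \leq \vol{S}$ from its outset, applying the conductance hypothesis to $\bar{S}$ gives $|\edgecut{E}{\bar{S}}{S}| \geq \Phi\vol{\bar{S}}$. All vertices of $\bar{S}$ incident to a boundary edge are added to $S$ in the next BFS round, so their total degree accounts for a shrinkage $\vol{\bar{S}_{t+1}} \leq (1-\Phi)\vol{\bar{S}_t}$. Running this for $-\log(\vol{S})/\log(1-\Phi) \geq \log m/|\log(1-\Phi)|$ rounds drives $\vol{\bar{S}}$ strictly below one, so $\bar{S} = \emptyset$ and $S = V$. Correctness follows: if $G$ has conductance at least $\Phi$, then $S = V$ at termination, no vertex has an outside neighbor, and $n = |S|$ is summed up the BFS tree and broadcast back; otherwise either $S = V$ (in which case $n$ is still computed correctly) or some vertex notices an outside neighbor in the final round and rejects.

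The step I expect to be the real technical nuisance is not the combinatorial estimates above but the distributed accounting: the maintainer $v$ has to learn the current values of $\vol{S_t}$ and $|\edgecut{E}{S_t}{\bar{S}_t}|$ in time to detect the Phase~1/Phase~2 transition, so the per-round reports from the BFS frontier must be pipelined up the tree without inflating the round complexity beyond a constant factor. Since the BFS depth is itself $\OO(\log m/|\log(1-\Phi)|)$, standard pipelining along the tree edges (as already used in \cref{alg:sum}) absorbs this cost into the claimed overall bound.
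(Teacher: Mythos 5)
Your proposal is correct and follows essentially the same argument as the paper: geometric growth $\vol{S_{t+1}} \geq (1+\Phi)\vol{S_t}$ in the first phase until $\vol{S} > \vol{\bar{S}}$, then geometric shrinkage $\vol{\bar{S}_{t+1}} \leq (1-\Phi)\vol{\bar{S}_t}$ in the second phase until $\bar{S} = \emptyset$, yielding the claimed round complexity. Your added remark about pipelining the aggregation of $\vol{S_t}$ and the cut size up the BFS tree is a reasonable implementation detail that the paper's proof leaves implicit.
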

  \begin{proof}
    It is easy to see that if the algorithm explores the whole graph, it computes $n$ correctly, and else it rejects. \Wlog, let $G$ have conductance $\Phi$. Let $S_i$ be the set $S$ after $i$ rounds and let $\bar{S}_i = V \minus S_i$. We denote the last round of the first (second) phase by $k$ ($\ell$).

    In the first phase, we have that $\vol{S_i} \geq (1+\Phi) \cdot \vol{S_{i-1}}$ for every round $i$ and by induction, $k \leq \log \vol{S_k} / \log (1+\Phi) \leq \log m / \log (1+\Phi)$. We also have that $\vol{S_k} \geq m/2 \geq \vol{\bar{S}_k}$ because $G$ has conductance $\Phi$. In the second phase, we have that $\vol{\bar{S}_i} \leq (1-\Phi) \cdot \vol{\bar{S}_{i-1}}$ for every round $i$. By induction, $\ell - k \geq - \log m / \log (1-\Phi) \geq \log \vol{\bar{S}_k}^{-1} / \log (1-\Phi)$ implies that that $\vol{\bar{S}_\ell} = 0$. Therefore, the algorithm has explored the whole graph.
    Clearly, $\ell \in \OO(\log m / \log (1-\Phi))$.
  \end{proof}

  To transform the algorithm into a tester in the CONGEST model, we start with each vertex being a maintainer initially. In every round every vertex chooses the vertex with the smallest id it has ever received a message from to be the maintainer and it forwards only this vertex' messages (the latter maintains the congestion bound). At the end of the algorithm, if $G$ has conductance $\Phi$, then there is only one maintainer (the vertex with the smallest id) and the algorithm continues by executing \cref{alg:cond-tester}. Otherwise, there might be multiple vertices that are still maintainers. However, none of these vertices has explored the whole graph, so all of them send a broadcast message to reject.

  \section{Lower Bound}
  \label{sec:lower_bound}

  In this section, we prove a lower bound of $\Omega(\log (n+m))$ on the round complexity for testing the conductance of a graph in the LOCAL model regardless of how the final decision of the tester is derived from the single votes of the vertices.
  
  For any $v\in V$, the \kdisc of $v$, denoted by $\fdisc{G}{v}$, is defined as the subgraph that is induced by the vertices that are at distance at most $k$ to $v$ without the edges between vertices at distance exactly $k$, and it is rooted at $v$.  We refer to the isomorphism type of $\fdisc{G}{v}$, \ie, the set of all rooted graphs isomorphic to $\fdisc{G}{v}$, by $\fdisci{G}{v}$. Let $\girth{G}$ denote the length of the shortest cycle in $G$. We need the following two lemmas to obtain the distribution over graphs to prove the lower bound.
  \begin{lemma}[{\citep{LubRam88}; \cf \citep[Section 16.8.3]{NauCom12}}]
    \label{thm:ramanujan_graphs}
    For every $n' \in \setn$ and every $d' \in \setn$ there exists a $d$-regular graph $G$ of size $n$ such that $G$ has conductance $\graphcond{G} = 1/\sqrt{2d}$ and girth $2 \log n / \log d$, and $n \geq n'$, $d \geq d'$.
  \end{lemma}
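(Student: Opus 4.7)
The plan is to invoke the Lubotzky--Phillips--Sarnak (LPS) Ramanujan graph construction, which is exactly what the cited reference provides, and check that with a sufficiently careful choice of parameters its standard properties yield each of the three claims (regularity and size, conductance, girth).

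First I would recall LPS: for every prime $p \equiv 1 \pmod 4$ and every prime $q$ satisfying the required quadratic reciprocity conditions, their construction yields a connected $(p+1)$-regular graph $X^{p,q}$ on $\Theta(q^3)$ vertices whose nontrivial adjacency eigenvalues all satisfy $|\lambda| \leq 2\sqrt{p}$. Given $n',d' \in \setn$, I would pick a prime $p \equiv 1 \pmod 4$ with $p+1$ at least $d'$ and at least some universal constant (chosen below), then a prime $q$ large enough that $X^{p,q}$ has at least $n'$ vertices, and set $d \defeq p+1$, $n \defeq |V(X^{p,q})|$, $G \defeq X^{p,q}$.

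For the conductance I would apply the discrete Cheeger inequality for $d$-regular graphs, $\graphcond{G} \geq \tfrac{1}{2}(1 - \lambda_2/d)$, together with the Ramanujan bound $\lambda_2 \leq 2\sqrt{d-1}$. This yields $\graphcond{G} \geq \tfrac{1}{2} - \sqrt{d-1}/d$; an elementary computation shows this exceeds $1/\sqrt{2d}$ for all sufficiently large $d$, so the universal constant mentioned above just needs to be chosen to enter this regime. (The equality sign in the statement should be read as a lower bound of this order.) For the girth, the classical LPS analysis gives $\girth{X^{p,q}} = \Omega(\log_p q)$, and since $n = \Theta(q^3)$ we have $\log q = \Theta(\log n)$; again by taking $p$ and $q$ large enough and absorbing constants into the choice of $d \geq d'$, the bound $\girth{G} \geq 2 \log n / \log d$ follows.

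The argument is essentially a repackaging of known results, so there is no real conceptual obstacle. The only point that requires care is the simultaneous calibration of the LPS parameters $(p,q)$ so that regularity, size, conductance, and girth bounds are all achieved at once: since all four quantities improve monotonically as $p$ and $q$ grow, this is accomplished by choosing both parameters sufficiently large relative to $d'$ and $n'$.
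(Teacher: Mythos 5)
The paper offers no proof of this lemma at all --- it is a bare citation of the Lubotzky--Phillips--Sarnak construction --- so your plan (LPS plus the easy direction of Cheeger's inequality) is exactly the intended route, and the conductance part is sound: reading the ``$=$'' as a lower bound, $\lambda_2 \le 2\sqrt{d-1}$ gives $\graphcond{G} \ge \tfrac{1}{2} - \sqrt{d-1}/d$, which exceeds $1/\sqrt{2d}$ once $d$ is a moderate absolute constant, and your parameter calibration for regularity and size is routine.

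The girth step, however, contains a genuine error. LPS gives $\girth{X^{p,q}} \ge 2\log_p q$ in the non-bipartite case and roughly $4\log_p q$ in the bipartite case; since $n = \Theta(q^3)$ these translate to $\tfrac{2}{3}\log_d n$ and $\tfrac{4}{3}\log_d n$ respectively, whereas the target is $2\log n/\log d = 2\log_d n$. This deficit is a multiplicative constant in front of $\log n/\log d$, and it \emph{cannot} be ``absorbed into the choice of $d \ge d'$'': both the LPS bound and the target scale identically as $\log n/\log d$, so enlarging $p$ or $q$ changes nothing. Indeed $2\log_d n$ is essentially the Moore bound, which no known explicit family attains, so the lemma's literal girth claim overstates the cited result and your argument could not close that gap by any parameter choice. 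The correct repair is to weaken the conclusion to $\girth{G} \ge c_0 \log n / \log d$ for an absolute constant $c_0 > 0$ (e.g.\ $c_0 = 2/3$), which LPS does deliver; this suffices for the application, since the proof of \cref{thm:lower_bound_iso_local} only needs girth at least $3k$ for some $k = \Theta(\log_d n)$ whose leading constant is free to be adjusted.
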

  
  The second lemma states that we can sparsify an arbitrary cut $\edgecut{E}{V_1}{V_2}$ in a $d$-regular graph with girth $3k$ without changing $\fdisci{G}{v}$ for any $v \in V$.
  In particular, it states that we can remove two edges in the cut and add them somewhere else, or the cut has size $\poly(d^k)$ only.
 	\begin{lemma}[{\citep[Lemma 8]{FicCon15}}]\footnotemark{}
  \footnotetext{The statement here is obtained as a special case by observing that we can assume $L = 1$ and $\lambda = 0$ in {\citep[Lemma 8]{FicCon15}}.}
 		\label{thm:rewiring}
 		Let $G=(V, E)$ be a $d$-regular graph with $\girth{G} \geq 3k$ for $k \geq 2$ and let $V_1 \cupdot V_2 = V$ be a partitioning of $V$.
 		Then either there exists a graph $H=(V, F)$ such that
    \begin{inenum}
      \item $\girth{H} \geq 3k$,
      \item $|F \cap (V_1 \times V_2)| \leq |E \cap (V_1 \times V_2)| - 2$, and 
      \item $\fdisci{H}{ w} = \fdisci{G}{ w} \forall w \in V$, or $\edgecut{E}{V_1}{V_2}
 			\leq 6d^{3k}$.
   \end{inenum}
 	\end{lemma}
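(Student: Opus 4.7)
The approach is an edge-rewiring argument. Given two cut edges $e_1 = (u_1, v_1)$ and $e_2 = (u_2, v_2)$ with $u_1, u_2 \in V_1$ and $v_1, v_2 \in V_2$, I will form $H$ by deleting $e_1, e_2$ and inserting the in-part edges $f_1 = (u_1, u_2)$ and $f_2 = (v_1, v_2)$. This preserves every vertex's degree, so $H$ is again $d$-regular, and it decreases the cut between $V_1$ and $V_2$ by exactly two, yielding item~(ii). Item~(iii) then becomes automatic once item~(i) is secured: in any $d$-regular graph of girth at least $3k \geq 2k+1$, the ball around any vertex up to depth $k$ is a tree, so the $k$-disc at every vertex is isomorphic to the rooted $d$-regular tree of depth $k$; hence $\fdisci{H}{w} = \fdisci{G}{w}$ for every $w \in V$.

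It therefore remains to guarantee item~(i), that is, that the swap creates no cycle of length less than $3k$. Any new cycle in $H$ must traverse $f_1$ or $f_2$, and a routine enumeration of the shapes of such cycles shows that it suffices to pick $(e_1, e_2)$ so that $u_2$ and $v_2$ both lie outside the $3k$-neighborhood of the edge $e_1$, namely $B_{3k-1}(u_1) \cup B_{3k-1}(v_1)$. This single condition covers both the single-new-edge cycles (using $\girth{G} \geq 3k$ together with the fact that deleting $e_i$ cannot shorten $\dist{u_i}{v_i}$) and the two-new-edge cycles (controlled via the cross distances $\dist{u_1}{v_2}$ and $\dist{v_1}{u_2}$).

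The remainder is a counting step. Because $G$ has girth at least $3k$, each ball $B_{3k-1}(x)$ embeds into a $d$-regular tree and thus contains $O(d^{3k-1})$ vertices; each such vertex is incident to at most $d$ cut edges, so the set of cut edges ``blocked'' by $e_1$ has size $O(d^{3k})$ with a small absolute constant. Choosing the threshold at $6d^{3k}$ therefore suffices: if $|E \cap (V_1 \times V_2)| > 6d^{3k}$, then some pair $(e_1, e_2)$ is compatible for rewiring and the desired $H$ exists; otherwise the alternative bound in the statement holds directly. I expect the main obstacle to be pinning down the exact constant, because each cycle shape demands a slightly different distance condition and the interactions among deleted and inserted edges are delicate, so a careful case analysis is needed to avoid overcounting the forbidden region and to land on the constant $6$ rather than something larger.
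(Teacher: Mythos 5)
The paper does not prove this lemma; it is imported verbatim from \citep[Lemma~8]{FicCon15} (as a special case), so there is no in-paper argument to compare against. Your two-edge swap is the standard proof of such rewiring lemmas and is sound in outline: deleting the cut edges $(u_1,v_1),(u_2,v_2)$ and inserting $(u_1,u_2),(v_1,v_2)$ preserves $d$-regularity, drops the cut by exactly two, and reduces item~(iii) to item~(i) because in a $d$-regular graph of girth at least $3k \geq 2k+2$ every \kdisc is the full $d$-ary tree of depth $k$ (the same observation the paper itself uses in \cref{sec:lower_bound}). Two points deserve to be made explicit rather than deferred to a ``routine enumeration.'' First, your ball condition $u_2,v_2 \notin B_{3k-1}(u_1)\cup B_{3k-1}(v_1)$ controls cycles through one new edge and cycles through both new edges in the orientation $u_1 \to v_2 \to v_1 \to u_2$, but the orientation $u_1 \to v_1$, $f_2$, $v_2 \to u_2$ is not covered by that condition; there you must invoke that $u_2v_2$ is an edge of $G$ and hence, by $\girth{G} \geq 3k$, any $u_2$--$v_2$ path avoiding that edge has length at least $3k-1$ --- you gesture at this, and it does close the case. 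Second, the condition also guarantees simplicity of $H$ (the inserted edges are not already present and have distinct endpoints), which should be stated since otherwise (ii) could fail. The counting is fine: at most $2d\,|B_{3k-1}(\cdot)| \leq 4d^{3k}$ cut edges are blocked by a fixed $e_1$, comfortably below the threshold $6d^{3k}$. I would call this a correct reconstruction of the intended argument rather than a different route; note only that the original lemma in \citep{FicCon15} is stated and proved in greater generality (non-regular degree sequences and general \kdisc multisets), so its proof carries additional bookkeeping that your specialization legitimately avoids.
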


  To prove the lower bound, we use an auxiliary model we call the ISO-LOCAL model. In this model, the input $I(\cdot)$ is empty but an additional oracle provides every vertex $v$ with the ability to construct $\fkdisci{r}{G}{v}$ in round $r$ if it knows $\fkdisci{r-1}{G}{u_i}$ of its neighbors $u_1, \ldots, u_{\degr{v}}$.  It should be noted that the ISO-LOCAL model is not a DCM due to the additional oracle.
  \begin{definition}[ISO-LOCAL model]
    Let $\dcm{G}{p_G}{I}$ be a DCM instance such that $I$ maps the whole support to the empty string.
    In addition to sending and receiving messages, in every round $r$ every vertex $v$ is provided
    access to a function $e_{r,v} : (\setn \cup \{ \star \})^r \times (\setn \cup \{ \star \})^r
    \rightarrow \{ 0, 1 \}$ during the local computation phase.
    The value of $e_{r,v}((i_1, \ldots, i_{r'}), (j_1, \ldots, j_{r''}))$ is $1$ iff $p'_v(i_1,
    \ldots, i_{r'}) = p'_v(j_1, \ldots, j_{r''})$, where
    \begin{equation*}
      p'_v(i_1, \ldots, i_r) \defeq
      \begin{cases}
      	v                                          & \text{ if } i_r = \star  \\
      	p_{p'_v(\star, i_1, \ldots, i_{r-1})}(i_r) & \text{ otherwise} \fstop
      \end{cases}
     \end{equation*}
     The instance $\dcm{G}{p_G}{I}$ equipped with such an oracle is called ISO-LOCAL.
   \end{definition}
   In other words, $p'_v(\cdot)$ takes a path of length at most $r$ that starts at $v$ and that is defined by a sequence of port numbers as input.  Then, it maps the path to its endpoint in $V$.
   Finally, $e_{r,v}(\cdot)$ tells whether two such paths end at the same vertex.

  It is a basic observation that a distributed algorithm can only depend on information that has reached it until the moment it performs the computation in question.
  \begin{lemma}[folklore; {\cf \citep[Section~2]{LinLoc92}}]
    \label{thm:distributed_communication_depth}
    Let $\dna{\dalg{A}}{O}$ be a DNDA. After $r$ rounds, the state of $\dalgon{A}{v}$ may depend only on $\degr{v}$, $I(v)$, the state of $\dalgon{A}{u}$ at time $r - \dist{v}{u}$ for vertices $u$ with $\dist{v}{u} < r$ and the random coins of $\dalg{A}$.
  \end{lemma}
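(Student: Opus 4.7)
The plan is to prove the lemma by induction on the round number $r$, using the three-phase structure of each round fixed in \cref{def:dcm}. In the base case $r=0$ no communication has yet occurred, so the state of $\dalgon{A}{v}$ is exactly its initialization: $\degr{v}$ (available from the port count), $I(v)$, and whatever random coins of $\dalg{A}$ have been consumed so far. The set of vertices $u$ with $\dist{v}{u} < 0$ is empty, so the statement holds vacuously.

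For the inductive step I would fix $r \geq 1$, assume the claim for all $r' < r$ and every vertex, and unpack what \cref{def:dcm} permits $\dalgon{A}{v}$ to depend on after round $r$. The state of $\dalgon{A}{v}$ at the end of round $r$ is a deterministic function of (i) $\dalgon{A}{v}$'s state at the end of round $r-1$, (ii) the random coins of $\dalg{A}$ used during round $r$, and (iii) the messages received in round $r$, one per neighbor. Any message $v$ receives in round $r$ from a neighbor $u$ is produced from $\dalgon{A}{u}$'s state after $u$'s local-computation phase in round $r$, which in turn is a function only of $\dalgon{A}{u}$'s state at the end of round $r-1$ and of fresh coins. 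Hence the state of $\dalgon{A}{v}$ at time $r$ reduces to a function of $\degr{v}$, random coins, and the states at time $r-1$ of $v$ itself and of each $u \in \neighbors{v}$.

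Next I would invoke the inductive hypothesis on each of these reference states. For any $w$ with $\dist{v}{w} < r$, take a shortest $v$-$w$ path and let $u$ be its second vertex, or $u = v$ if $w = v$; then $u$ is $v$ or a neighbor of $v$, and $\dist{u}{w} = \max(0,\dist{v}{w}-1) < r-1$. Applying the IH to $u$ at round $r-1$, the state of $\dalgon{A}{u}$ at time $r-1$ depends only on $\degr{u}$, $I(u)$, random coins, and on $\dalgon{A}{w}$'s state at time $(r-1) - \dist{u}{w} = r - \dist{v}{w}$ for the relevant $w$. The local parameters $\degr{u}$ and $I(u)$ of a neighbor $u$ are themselves part of $\dalgon{A}{u}$'s state at time $r - 1 = r - \dist{v}{u}$ and are thus already subsumed by the allowed dependencies. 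Collecting over all neighbors $u$ of $v$ covers every $w$ with $\dist{v}{w} < r$ and yields the desired characterization.

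The main obstacle, such as it is, is purely bookkeeping around the timing convention: one has to verify that a message received during round $r$ reflects a neighbor's state no later than one round behind (up to a purely local update from fresh coins), so that the triangle-inequality step yields exactly $r - \dist{v}{w}$ rather than an off-by-one time. Once this convention is pinned down from the phase structure of \cref{def:dcm}, the induction is a routine unfolding, which is consistent with the folklore status of the statement.
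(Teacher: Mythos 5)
Your induction is correct and is exactly the standard argument this folklore statement calls for; the paper itself gives no proof, only the citation to Lin\-ial, so there is nothing to diverge from. The only care needed is the timing convention you already flag (a message received in round $r$ is computed from the sender's state at the end of round $r-1$ plus fresh coins, per the compute--send--receive phase order of \cref{def:dcm}), and your handling of it yields the stated $r-\dist{v}{w}$ indices.
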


  \subsection{Proof of the Lower Bound}

  Let $G = (V, E)$ be an expander graph obtained from applying \Cref{thm:ramanujan_graphs} and let $k = \Theta(\log n)$. Observe that if a graph is $d$-regular and it has girth $3k$, then all its \kdiscs are pairwise isomorphic.  In particular, all \kdiscs are full $d$-ary trees of depth $k$.

  We will prove that a distributed algorithm $\dna{\dalg{A}}{O}$ with round complexity $r$ in the ISO-LOCAL model decides based on the set of views $\fkdisci{r}{G}{v}$ that the different instances of $\dalg{A}$ have (see \Cref{thm:iso_local_only_iso_type}).  Using \Cref{thm:rewiring}, it will be easy to come up with a graph $H$ that is a bad expander but whose \kdiscs are isomorphic to the ones of $G$.  This implies a lower bound of $k = \Theta(\log n)$ for testing conductance in the ISO-LOCAL model (see \Cref{thm:lower_bound_iso_local}).  Finally, we prove (in the appendix) that a lower bound on the round complexity of a tester in the ISO-LOCAL model implies the same bound in the LOCAL model. Actually, we prove the contrapositive: a tester in the LOCAL model implies a tester in the ISO-LOCAL model (see \Cref{thm:lower_bound_implication}).

  \begin{restatable}{lemma}{thmisolocalonlyisotype}
    \label{thm:iso_local_only_iso_type}
    Let $\dna{\dalg{A}}{O}$ be a deterministic DNDA in the ISO-LOCAL model.
    The output of $\dalgon{A}{v}$ depends only on $\fkdisci{r}{G}{v}$ and the port numbering $(p_v)_{v \in V}$.
  \end{restatable}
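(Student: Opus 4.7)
The plan is to proceed by induction on the round number $r$, showing that after round $r$ the state of $\dalgon{A}{v}$ is a function of $\fkdisci{r}{G}{v}$ together with the port numbering of the vertices in that disc. The base case $r=0$ is immediate: since $I(v)$ is the empty string in the ISO-LOCAL model and $\dalg{A}$ is deterministic, the initial state is a fixed constant, independent of $G$.

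For the inductive step, I would invoke \cref{thm:distributed_communication_depth} to split the round-$r$ state of $\dalgon{A}{v}$ into three contributions: its own state at round $r-1$, the messages received from each neighbor $p_v(i)$ (each a function of that neighbor's state at round $r-1$), and the answers of the oracle queries $e_{r,v}(\cdot,\cdot)$ performed during local computation. The first contribution is handled directly by the induction hypothesis applied at $v$. For the oracle, the answer to $e_{r,v}(P_1,P_2)$ only asks whether two port-sequences of length at most $r$ starting at $v$ lead to the same vertex, which can be read off mechanically from the labeled rooted graph consisting of $\fkdisc{r}{G}{v}$ together with its port numbering; hence it depends only on $\fkdisci{r}{G}{v}$ and the ports.

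The main obstacle, and the only place with genuine content, is the neighbor messages: I need to argue that applying the induction hypothesis at the neighbor $u = p_v(i)$ yields a function of $\fkdisci{r}{G}{v}$ and its port numbering. This reduces to verifying that $\fkdisci{r-1}{G}{u}$ together with the port numbering on that disc is determined by the corresponding data at $v$. Every vertex within distance $r-1$ of $u$ lies within distance $r$ of $v$, so the vertex set is visible. For the edges, the only worry is an edge whose endpoints both lie at distance exactly $r$ from $v$ and are therefore deleted from $\fkdisc{r}{G}{v}$; but by the triangle inequality both endpoints would then be at distance exactly $r-1$ from $u$, which deletes the same edge from $\fkdisc{r-1}{G}{u}$. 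A short check also shows that $G$-distances from $u$ to any vertex at distance at most $r-1$ can be computed entirely within $\fkdisc{r}{G}{v}$, so the smaller disc is reconstructible. Once this edge-bookkeeping is in place, the three contributions assemble into the required function and the induction closes.
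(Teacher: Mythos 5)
Your argument is correct and is essentially the paper's proof: both proceed by induction on the round number, resting on the two key observations that the oracle answers $e_{r,v}$ and the neighbors' $(r-1)$-discs (hence, by the induction hypothesis, their round-$(r-1)$ states and the messages they send) are reconstructible from $\fkdisci{r}{G}{v}$ and the port numbering --- the paper merely packages this as an explicit canonical algorithm that patches the neighbors' views together via the oracle. One cosmetic slip: an edge deleted from $\fkdisc{r}{G}{v}$ has both endpoints at distance \emph{at least} $r-1$ from the neighbor $u$ (not necessarily exactly $r-1$), but in either case it does not appear in $\fkdisc{r-1}{G}{u}$, so your conclusion stands.
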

  
  \begin{proof}
    Instead of analyzing $\dna{\dalg{A}}{O}$, we will analyze a canonical algorithm $\dna{\dalg{B}}{O}$ that simulates $\dna{\dalg{A}}{O}$ depending only on $\fkdisci{r}{G}{v}$. Employing $\dalg{B}$, we prove the following statement by induction: After the local computation phase of round $r$, the state of $\dalgon{A}{v}$ depends only on $\fkdisci{r}{G}{v}$.
  
    The first local computation phase of $\dalgon{A}{v}$ can only depend on the port numbering and $I(v)$ (the empty string). Therefore, $\dalgon{B}{v}$ can simulate the execution of the first round of $\dalgon{A}{v}$.
  
    Let the current round be $r > 1$. Algorithm $\dalgon{B}{v}$ maintains a rooted graph $H_{v}$ that resembles $\fkdisci{r}{G}{v}$. The adjacency lists of $H_{v}$ are ordered according to $(p_v)_{v \in V}$. Let $H_v(r)$ be the value of $H_v$ after the computation phase of round $r$. In the send phase, vertex $v$ sends $H_v(r)$ to each of its neighbors. In the receive phase, vertex $v$ receives graphs $H_{u_1}(r), \ldots, H_{u_{\degr{v}}}(r)$ from its neighbors $u_1, \ldots, u_{\degr{v}}$. In the subsequent computation phase of round $r+1$, vertex $v$ extends $H_v(r) = \fkdisci{r}{G}{v}$ to $\fkdisci{r+1}{G}{v} = H_v(r+1)$ by querying $e_{r,v}$ on all pairs of vertices of $V(H_v(r)) \cup V(H_{u_1}(r)) \cup \ldots \cup V(H_{u_{\degr{v}}}(r))$ to identity vertices and patching the different views together.
  
    Note that $H_v(r+1)$ also provides the isomorphism type of $\fkdisci{r - \dist{v}{u}}{G}{u}$ for every vertex $u$ at distance at most $r$ from $v$. Since the adjacency lists of $H_v$ are ordered according to the port numbering, it is also possible to reconstruct $e_{r-\dist{v}{u},u}(\cdot)$. By the induction hypothesis, $\dalgon{B}{v}$ can now simulate round $r - \dist{v}{u}$ of $\dalgon{A}{u}$ for every such $u$. By \Cref{thm:distributed_communication_depth}, this is enough to simulate the local computation phase of round $r$ of $\dalgon{A}{u}$.
  \end{proof}

  We use the lemma to show that there is no tester for conductance in the ISO-LOCAL model.  
  
  \begin{proposition}
    \label{thm:lower_bound_iso_local}
    Let $G=(V,E)$ be a graph on $n$ vertices, and let $\Phi > 0$ be any constant. Any algorithm for testing if $G$ has conductance at least $\Phi$ or is $\epsilon$-far from having conductance at least $c\Phi^2$ (for a constant $c$) in the ISO-LOCAL model that succeeds with probability $2/3$ requires $\Omega(\log n )$ rounds of communication.
  \end{proposition}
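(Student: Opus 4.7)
The plan is to exhibit two $n$-vertex graphs $G_{\yes}$ and $G_{\no}$ whose $k$-disc isomorphism types coincide at every vertex for some $k = \Theta(\log n)$, while $G_{\yes}$ has conductance at least $\Phi$ and $G_{\no}$ is $\epsilon$-far from having conductance at least $c \Phi^2$. By \cref{thm:iso_local_only_iso_type}, any deterministic ISO-LOCAL algorithm using fewer than $k$ rounds then returns identical outputs on $G_{\yes}$ and $G_{\no}$; since the correct answers differ, the tester must err, and the randomised case follows from Yao's minimax principle.

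I would obtain $G_{\yes}$ from \cref{thm:ramanujan_graphs} as a $d$-regular $n$-vertex graph of girth at least $2 \log n / \log d$ and conductance at least $\Phi$, for a constant $d$ chosen so that $1/\sqrt{2d} \geq \Phi$. Fix a balanced partition $V = V_1 \cupdot V_2$; the conductance bound gives $|\edgecut{E}{V_1}{V_2}| \geq \Phi d n / 2$. Set $k = \lfloor \alpha \log n / \log d \rfloor$ for a constant $\alpha \in (0, 1/3)$, so that simultaneously $3k \leq 2 \log n / \log d$ (making \cref{thm:rewiring} applicable to all intermediate graphs) and $d^{3k} = n^{3\alpha} = o(n)$.

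Starting from $G_{\yes}$, I would iterate \cref{thm:rewiring} with respect to $(V_1, V_2)$: as long as the current cut exceeds $6 d^{3k}$, the lemma produces a new $d$-regular graph with cut reduced by two, girth still at least $3k$, and every vertex's $k$-disc isomorphism type preserved. After at most $\Phi d n / 4$ steps the process terminates at a graph $G_{\no}$ with $|\edgecut{E_{G_{\no}}}{V_1}{V_2}| \leq 6 d^{3k}$, so $\setcond{V_1} \leq 12 d^{3k-1}/n = o(1)$ in $G_{\no}$. To verify $\epsilon$-farness, note that modifying at most $\epsilon m = \epsilon d n / 2$ edges changes the cut size by at most $\epsilon d n / 2$ and each volume $\vol{V_i}$ by at most $\epsilon d n$, so the cut conductance in any such modification is at most $(6 d^{3k} + \epsilon d n / 2) / ((1/2 - \epsilon) d n) < c \Phi^2$ provided $\epsilon$ is a sufficiently small constant multiple of $c \Phi^2$ and $n$ is large enough; hence no small modification can restore conductance $c \Phi^2$.

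The main obstacle will be reconciling \cref{thm:iso_local_only_iso_type}'s dependence on the port numbering $(p_v)_{v \in V}$ with the fact that \cref{thm:rewiring} preserves $k$-discs only as abstract rooted graphs. I would handle this by propagating port labels through each rewiring step: whenever an edge swap replaces $(u_1, v_1), (u_2, v_2)$ with $(u_1, u_2), (v_1, v_2)$, the port label at $u_1$ of the old edge is reassigned to the new edge $(u_1, u_2)$ at $u_1$, and analogously at $u_2, v_1, v_2$. This keeps the multiset of port labels at every vertex intact, and inductively the rewired $k$-discs can be matched by rooted-graph isomorphisms that respect port numbering. With port numberings on $G_{\yes}$ and $G_{\no}$ aligned in this manner, each vertex's port-labelled view coincides on both graphs, so the oracle $e_{r,v}$ evaluates identically for every $r < k$, and \cref{thm:iso_local_only_iso_type} yields the claimed contradiction.
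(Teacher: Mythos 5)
Your proposal matches the paper's proof essentially step for step: the same Ramanujan-graph construction from \cref{thm:ramanujan_graphs}, the same iterated application of \cref{thm:rewiring} to sparsify a balanced cut down to $6d^{3k}$ edges while preserving all \kdisc types for $k = \Theta(\log n)$, and the same appeal to \cref{thm:iso_local_only_iso_type} plus Yao's principle. The only (immaterial) difference is the treatment of port numberings: you couple them explicitly through each rewiring step (which peeks into the internals of \cref{thm:rewiring}), whereas the paper simply notes that both graphs are $d$-regular with every \kdisc a full $d$-ary tree, so the uniform distributions over port numberings coincide on the two instances.
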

  \begin{proof}
    Let $G=(V,E)$ be a $d$-regular graph provided by \Cref{thm:ramanujan_graphs} and set $k = \tfrac{1}{3} \log_d \left( \tfrac{c \Phi^2 - \epsilon}{6} d n \right)$.
    
    \Wlog assume that $n$ is even, and let $S \subset V$ be a set of size $n/2$. Apply \Cref{thm:rewiring} (with $V_1 = S$ and $V_2 = V \minus S$) repeatedly to $G$ until $|\edgecut{E}{S}{V \minus S}| \leq 6d^{3k}$ holds. Let $H=(V,E')$ be the resulting graph. We have that $|E'(S, V \minus S)| \leq (c \Phi^2 - \epsilon) d n$, and $\vol{S} = nd / 2$. Therefore, $H$ is $\epsilon$-far from having conductance $c \Phi^2$. Let $\mathcal{D}_G$ ($\mathcal{D}_H$) be the uniform distribution over all ISO-LOCAL models $\dcm{G}{p_G}{I}$ ($\dcm{H}{p_H}{I}$) such that $p_G$ ($p_H$) ranges over all possible mappings, \ie, port numberings.
  
    We use Yao's principle to prove the lower bound.  Let $\dna{\dalg{A}}{O}$ be a tester for conductance that has round complexity smaller than $k$ in the ISO-LOCAL model.  Since $G$ is $d$-regular and $\girth{G} \geq 3k$, $\fkdisci{k}{G}{v}$ is a full $d$-ary tree of depth $k$ for every $v \in V$.  For any pair $u,v \in V$, we have that $\fdisci{G}{u}$ is equal to $\fdisci{H}{v}$ by \Cref{thm:rewiring}.  Since the port numberings of two vertices are independent of each other, $(p_v)_{v \in V}$ is a valid port numbering for $G \in \mathcal{D}_G$ iff it is valid for $H \in \mathcal{D}_H$.  By \Cref{thm:iso_local_only_iso_type}, $\dna{\dalg{A}}{O}$ cannot distinguish between $G$ and $H$.
  \end{proof}
  
  To complete the proof of the lower bound, we show (in the appendix) that each vertex in the graph in the ISO-LOCAL model can choose an id randomly, and with high probability no two ids will be identical.
  
  \begin{restatable}{proposition}{thmlowerboundimplication}
    \label{thm:lower_bound_implication} Let $\dna{\dalg{A}}{O}$ be a randomized tester in the LOCAL model that succeeds with probability $p$. Then, there is a randomized tester $\dna{\dalg{B}}{O}$ in the ISO-LOCAL model that succeeds with probability at least $p - o(1)$, and has the same round complexity.
  \end{restatable}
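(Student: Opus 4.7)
The plan is to simulate $\dalg{A}$ in the ISO-LOCAL model by having every vertex locally draw its own random identifier from a sufficiently large polynomial range, and then run the LOCAL-model algorithm $\dalg{A}$ on top of these random ids. Concretely, I would define $\dalgon{B}{v}$ to begin with a communication-free preprocessing step in which $v$ picks $\mathrm{id}(v)$ independently and uniformly at random from $[n^{c'}]$, where $c' \geq \max(c,3)$ is a constant chosen so that $c'$ dominates the identifier exponent required by $\dalg{A}$. After this, $\dalgon{B}{v}$ runs $\dalgon{A}{v}$ using $\mathrm{id}(v)$ as its vertex identifier (and $n$ as the remaining LOCAL-model input, which is a global parameter). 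Because the preprocessing needs no rounds and the subsequent steps are exactly those of $\dalg{A}$, the round complexity is preserved.

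The correctness analysis splits into two parts. First, a union bound over the $\binom{n}{2}$ pairs of vertices shows that the probability of any identifier collision is at most $\binom{n}{2} / n^{c'} \leq n^{2-c'} = o(1)$. Second, conditioned on all identifiers being pairwise distinct, the view of every $\dalgon{B}{v}$ coincides with a valid LOCAL-model execution of $\dalgon{A}{v}$ on the same graph under the random id assignment produced in preprocessing; in particular, $\dalg{A}$'s messages and local decisions are unaffected by the extra ISO-LOCAL oracle, which the simulation simply never invokes. Since $\dalg{A}$ must succeed with probability at least $p$ for every admissible id assignment, the overall success probability of $\dalg{B}$ is at least $(1-o(1)) p \geq p - o(1)$.

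The main conceptual step is the observation that the extra oracle $e_{r,v}(\cdot,\cdot)$ provided by the ISO-LOCAL model can be ignored by the simulation: any LOCAL-model algorithm relies only on its local input, the exchanged messages, and its private randomness, so feeding it random unique ids is enough to reproduce a faithful LOCAL execution inside the ISO-LOCAL instance. The only mildly delicate point is aligning $\dalg{A}$'s expected id range $[n^c]$ with the sampling range, which is handled by taking $c' = \max(c,3)$; everything else is the routine union bound and conditional-probability manipulation described above.
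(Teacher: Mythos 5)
Your proposal is correct and follows essentially the same route as the paper's proof: each vertex locally samples a random identifier from a polynomially large range, a union bound shows all identifiers are distinct with probability $1-o(1)$, and conditioned on that event the simulation is a faithful LOCAL execution of $\dalg{A}$. Your version is in fact slightly more careful than the paper's (matching the range $[n^{c'}]$ to the exponent $c$ required by the LOCAL model and spelling out the conditional success bound $(1-o(1))p \geq p - o(1)$), but there is no substantive difference in the argument.
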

  \proofref{proof:thm_lower_bound_implication}

  \section{Open Problems}
  \label{sec:open_problems}
  
  In the case of one-sided distributed testers, it is natural to define the acceptance rule $O(\cdot)$ of a distributed tester such that all vertices have to accept or at least one vertex has to reject.  This is because in the case of rejection, the tester is required to observe a witness.  However, for two-sided testers no such requirement exists.  Requiring that all vertices either accept or reject simultaneously, which can be satisfied by a slightly modified version of \cref{alg:cond-tester}, seems to be quite strong.  On the other hand, it might not always be possible to obtain a lower bound that is independent of the acceptance rule as in \cref{thm:lower_bound_nn}. To this end, it would be interesting to compare the power of different rules.
  
  \section*{Acknowledgments}
  We would like to thank Gopal Pandurangan for pointing out related work \cite{DasDis13, MolDis17}, and we would like to thank Pan Peng for inspiring discussions on spectral graph theory.
  We are grateful for the helpful comments of anonymous reviewers.
  
  \newpage
    \bibliographystyle{plainnat}
    \bibliography{literature_cleaned.bib} %
    \clearpage
  
  \appendix
  
  \section{Proofs from \Cref{sec:upper_bound}}
  \label{sec:app-2}
  
  \subsection{\Cref{alg:bfs,alg:sum}}
  \label{subsec:alg}
  
    \begin{algorithm}
      \caption{Construct BFS tree}
      \label{alg:bfs}
      \begin{algorithmic}[1]
        \Procedure{BFS}{$G$, $D$}
          \CompAllVert{$v$}
            \State $T_v \gets (v, \cdot)$ \Comment{Set root to itself, parent to empty}
            \State $minid \gets v$
            \State send $(v, v)$ to every neighbor $u \in \neighbors{v}$
          \EndCompAllVert
          \For{\DistNumRoundsAllVert{$D$}{$w$}}
            \State $R_v \gets \{ (v', u') \text{ received} \sep u' \in \neighbors{w} \}$
            \State $(v, u) \gets \arg\min_{(v',u') \in R_v} v'$
            \If{$T_w = (\cdot)$ or $v' < minid$}
              \State $T_w \gets (v, u)$ \Comment{Set root to $v$, parent to $u$}
              \State send $(v, w)$ to all neighbors $\neq u$
            \EndIf
          \EndFor
        \EndProcedure
      \end{algorithmic}
    \end{algorithm}
  
    \begin{algorithm}
      \caption{Aggregate sum of vertex values and propagate it to all vertices}
      \label{alg:sum}
      \begin{algorithmic}[1]
        \Require $\forall v : v \text{ has local information } f(v)$
        \Ensure $\forall v : v \text{ has information } \sum_{u \in V} f(u)$
        \Procedure{AggregateSum}{$G$, $D$, $f : V \rightarrow \setr$}
          \For{\DistNumRoundsAllVert{$D$}{$v$}}
            \If{$v$ received partial sums $s_u$ from all its children $u$ in BFS tree}
              \State $s_v \gets f(v) + \sum_{u} s_u$
              \State send $s_v$ to parent in BFS tree
            \EndIf
          \EndFor
          \CompVert{root $r$ of BFS tree}
            \State send total sum $s = \sum_v f(v)$ to all children
          \EndCompVert
          \For{\DistNumRoundsAllVert{$D$}{$v$}}
            \If{$v$ received total sum $s$ from its parent}
              \State send $s_v$ to all children in BFS tree
            \EndIf
          \EndFor
          \State \Return $s_v$ \Comment{consider $s_v$ to be the output of the algorithm}
        \EndProcedure
      \end{algorithmic}
    \end{algorithm}
    
    \FloatBarrier
  
  \subsection{Proof of \Cref{thm:approx_end_dist}}
  \label{subsec:complete-app}
  
    \thmapproxenddist*
    \begin{proof}
      \label{proof:approx_end_dist}
      We have $E[\widehat{\Walk}^\ell_{v,u}] = \Walkex[v,u]{\ell}$.
      By Hoeffding's inequality, it holds that
      \begin{equation}
        \label{eq:approx_end_dist-hoeff}
        \Pr[|\widehat{\Walk}^\ell_{v,u} - E[\widehat{\Walk}^\ell_{v,u}]| \geq m^{-10}] \leq 2 \exp \left( -\frac{N}{3 m^{40}} \right) \leq m^{-10} \fstop
      \end{equation}
  
      Condition on $|\widehat{\Walk}^\ell_{v,u} - E[\widehat{\Walk}^\ell_{v,u}]| < m^{-20}$, which happens with probability at least $1 - 1 / m^{10}$.
      If $\widehat{\Walk}^\ell_{v,u} < m^{-2}$, then
      \begin{equation*}
        \Walkex[v,u]{\ell} = E[\widehat{\Walk}^\ell_{v,u}] < \widehat{\Walk}^\ell_{v,u} + m^{-20} = m^{-2} + m^{-10} < 2 m^{-2} \fstop
      \end{equation*}
      Let $\pi' = \onevector{v}$.
      We bound $\normb{\Walk^\ell \vecto{\pi'} - \statdist}$ from below.
      \begin{equation*}
        \normb{\Walk^\ell \vecto{\pi'} - \statdist}
        \geq |\Walkex[v,u]{\ell} - \degr{u}/2m|
        \geq -(2m^{-2} - 1/(2m))
        \geq 1 / (4m) \fstop
      \end{equation*}
      By the contrapositive of \cref{thm:high-cond-close}, $G$ has conductance less than~$\Phi$.
    \end{proof}

  \subsection{Proof of \Cref{thm:spec_corollary}}
  \label{subsec:proof-sound}
  \label{proof:thm_spec_corollary}
  
  The \emph{(normalized) Laplacian} becomes useful when studying cuts in a graph.
  \begin{definition}[Normalized Laplacian]
    Let $G=(V,E)$ be a graph.  The normalized Laplacian is defined by $\Nlap \defeq \Id -
    \Degr^{-1/2} \Adj \Degr^{-1/2}$.
  \end{definition}
  
  The following decomposition of the walk matrix turns out to be useful in this context.
  \begin{lemma}[{\citep[Lemma 12.2]{LevMar09}}]\footnotemark{}
    \footnotetext{In \citep{LevMar09}, the result is stated for a right stochastic walk matrix and its \emph{right} eigenbasis that is orthonormal with respect to the non-standard inner product $\innerprod{f}{g}_\pi = \sum f(x) g(x) \statdist(x)$.
      The statement here is adapted to our notation.}
    \label{thm:walk_matrix_decomposition}
    Let $G$ be a graph.  The walk matrix $\Walk$ can be decomposed as
    \begin{equation*}
      \frac{\Walk^\ell(u,v)}{\statdist(v)} = 1 + \sum_{i=2}^{n} \Nwalkew_i^\ell \frac{\Nwalkev_i(u)
      \Nwalkev_i(v)}{\sqrt{\degr{u} \degr{v}}} \fstop
    \end{equation*}
  \end{lemma}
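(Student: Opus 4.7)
The plan is to derive the decomposition of $\Walk^\ell$ from that of the symmetric normalized walk matrix $\Nwalk$ via the similarity $\Walk = \Degr^{1/2} \Nwalk \Degr^{-1/2}$. First, I would verify that $\Nwalk$ is symmetric: the lazy random walk is reversible, i.e., $w_{uv}\degr{v} = w_{vu}\degr{u}$ for all $u,v$, and hence $\Nwalk(u,v) = w_{uv}\sqrt{\degr{v}/\degr{u}} = w_{vu}\sqrt{\degr{u}/\degr{v}} = \Nwalk(v,u)$. The spectral theorem then furnishes an orthonormal eigenbasis $\{\Nwalkev_i\}$ with real eigenvalues, giving $\Nwalk = \sum_i \Nwalkew_i\, \Nwalkev_i \Nwalkev_i^T$; raising to the $\ell$-th power and reading off the $(u,v)$-entry yields $\Nwalk^\ell(u,v) = \sum_i \Nwalkew_i^\ell\, \Nwalkev_i(u)\Nwalkev_i(v)$.

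Next, I would transfer this back to $\Walk^\ell$. Because similarity commutes with taking powers, $\Walk^\ell = \Degr^{1/2} \Nwalk^\ell \Degr^{-1/2}$, which gives entry-wise $\Walk^\ell(u,v) = \sqrt{\degr{u}/\degr{v}}\cdot \Nwalk^\ell(u,v)$. I would then divide by $\statdist(v) = \degr{v}/(2m)$ and distribute the resulting degree-dependent prefactor into each summand of the spectral expansion. The constant leading term is extracted by isolating $i = 1$ using $\Nwalkev_1 = \sqrt{\statdist}$ and $\Nwalkew_1 = 1$: the product $\Nwalkev_1(u)\Nwalkev_1(v) = \sqrt{\degr{u}\degr{v}}/(2m)$ is designed exactly so that, combined with the prefactor, it produces the constant of the decomposition, leaving the tail for $i \geq 2$ in the symmetric form $\Nwalkev_i(u)\Nwalkev_i(v)/\sqrt{\degr{u}\degr{v}}$.

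The entire argument is pure linear algebra and I do not expect any genuine obstacle; the one point that needs care is tracking the $\Degr^{\pm 1/2}$ factors so that the prefactors in the $i = 1$ and $i \geq 2$ summands simplify correctly. Detailed balance $w_{uv}\degr{v} = w_{vu}\degr{u}$ is precisely what forces these factors to cancel symmetrically in $u$ and $v$. Since the statement is quoted from a textbook, I would also cross-check the final normalization against \citep{ChuSpe} or \citep{LevMar09} to confirm that the orthonormalization convention for $\Nwalkev_i$ matches the one assumed in the displayed identity.
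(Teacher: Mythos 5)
The paper offers no proof of this lemma at all --- it is imported from Levin--Peres--Wilmer (Lemma~12.2) with only a footnote about notation --- so your proposal supplies a proof rather than paralleling one. Your route is the standard one and is, up to a change of basis, identical to the textbook's: diagonalizing the symmetric conjugate $\Nwalk = \Degr^{-1/2}\Walk\Degr^{1/2}$ is equivalent to working in the eigenbasis that is orthonormal under $\innerprod{f}{g}_\pi$, and reversibility ($w_{uv}\degr{v}=w_{vu}\degr{u}$, both sides equal $1/2$ for an edge of the lazy walk) is indeed exactly what makes $\Nwalk$ symmetric. There is no missing idea in the approach.

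The genuine gap is in the final bookkeeping, which you wave through with ``is designed exactly so that \ldots it produces the constant of the decomposition'': a literal execution of your plan does \emph{not} reproduce the displayed identity, because the identity as printed is off by a normalization. Two points. First, with the paper's $\Walk$ (column-stochastic: $\Walk(u,v)$ is the probability of stepping \emph{from $v$ to $u$}), the $i=1$ term of $\Walk^\ell(u,v)/\statdist(v)$ is $\degr{u}/\degr{v}$, not $1$; the constant $1$ emerges only when you divide by the stationary mass of the walk's \emph{endpoint}, which in this convention is the first index. Second, once the $i=1$ term is peeled off correctly, the tail carries a prefactor $2m/\sqrt{\degr{u}\degr{v}}$, not $1/\sqrt{\degr{u}\degr{v}}$; equivalently, the faithful translation of LPW's $f_j(x)f_j(y)$ is $\Nwalkev_i(u)\Nwalkev_i(v)/\sqrt{\statdist(u)\statdist(v)}$. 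You can see that the printed formula cannot be right by setting $\ell=0$: completeness of the eigenbasis gives $\sum_{i\geq 2}\Nwalkev_i(u)\Nwalkev_i(v)=\delta_{uv}-\sqrt{\degr{u}\degr{v}}/(2m)$, so for $u\neq v$ the right-hand side equals $1-1/(2m)\neq 0$ while the left-hand side is $0$; inserting the factor $2m$ fixes this. So when you carry out the cross-check against \citep{LevMar09} that you promise, carry the $2m$ explicitly and state which index is the start vertex --- otherwise the ``pure linear algebra'' step is exactly where the proof breaks.
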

  
  We would like to prove that if the conductance of some set $S$ of vertices is small, then a constant fraction of the volume of $S$ belongs to some (basically) weak vertices. The following statement is a preliminary version of the result that we aim for. It proves the existence of a single vertex with some (unknown) volume only.

  \begin{lemma}
    \label{thm:exists_one_weak_vertex}
    Let $S \subset V$ be such that $\vol{S} \le \vol{\bar{S}}$ and $\setcond{S} \leq \delta$.
    Then, for any $\ell \in \setn$, there exists a vertex $v \in S$ such that
    \begin{equation*}
      \normb{\Walk^\ell(v,\cdot) - \statdist}^2 > \frac{1}{16 m^7} (1 - 4 \delta)^{2 \ell}
    \end{equation*}
  \end{lemma}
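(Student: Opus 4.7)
The plan is to locate a single weak vertex in $S$ via a Rayleigh-quotient argument on the normalised lazy walk matrix $\Nwalk$. Define $\sigma \in \setr^V$ by $\sigma(v) = \degr{v}/\vol{S}$ for $v \in S$ and $\sigma(v) = 0$ otherwise (the stationary distribution restricted to $S$), and set $\tau \defeq \Degr^{-1/2}\sigma$. A direct calculation gives $\hnormb{\tau}^2 = 1/\vol{S}$ and $\hinnerprod{\tau}{\Nwalkev_1} = 1/\sqrt{2m}$ (using $\Nwalkev_1 = \sqrt{\statdist}$), so decomposing $\tau = (1/\sqrt{2m})\Nwalkev_1 + \sum_{i \geq 2}\alpha_i \Nwalkev_i$ yields $\sum_{i \geq 2}\alpha_i^2 = \vol{\bar{S}}/(2m\vol{S}) \geq 1/(2m)$, where the last inequality uses $\vol{S} \leq \vol{\bar{S}}$. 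Writing $\Nwalk = \Id - \tfrac{1}{2}\Nlap$ and evaluating the Dirichlet form on $\Degr^{-1/2}\tau = \onevector{S}/\vol{S}$ gives $\hinnerprod{\tau}{\Nlap\tau} = |\edgecut{E}{S}{\bar{S}}|/\vol{S}^2$, so $\hinnerprod{\tau}{\Nwalk\tau} = (1-\setcond{S}/2)/\vol{S}$. Matching this with the spectral expansion $\hinnerprod{\tau}{\Nwalk\tau} = 1/(2m) + \sum_{i \geq 2}\Nwalkew_i\alpha_i^2$ and using $\vol{\bar{S}} \geq m$ yields $\sum_{i \geq 2}\Nwalkew_i\alpha_i^2 \geq (1-\delta)\sum_{i \geq 2}\alpha_i^2$.

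Next I apply Jensen's inequality to the convex function $x \mapsto x^{2\ell}$ with the probability weights $\alpha_i^2/\sum_{j \geq 2}\alpha_j^2$ on the eigenvalues $\Nwalkew_i$ to obtain $\sum_{i \geq 2}\Nwalkew_i^{2\ell}\alpha_i^2 \geq (1-\delta)^{2\ell}\sum_{i \geq 2}\alpha_i^2 \geq (1-\delta)^{2\ell}/(2m)$. The left hand side equals $\hnormb{\Nwalk^\ell\tau - \Nwalkev_1/\sqrt{2m}}^2$. Since $\Nwalk^\ell\tau = \Degr^{-1/2}\Walk^\ell\sigma$ and $\Nwalkev_1/\sqrt{2m} = \Degr^{-1/2}\statdist$, I can rewrite it as $\sum_u ((\Walk^\ell\sigma)(u) - \statdist(u))^2/\degr{u}$, an aggregate $\chi^2$-style distance between the walk started from $\sigma$ and the stationary distribution.

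To localise this to a single vertex, decompose $\Walk^\ell\sigma - \statdist = \sum_{v \in S}(\degr{v}/\vol{S})(\Walk^\ell\vecto{e}_v - \statdist)$, apply Jensen's inequality for $x \mapsto x^2$ componentwise with weights $\degr{v}/\vol{S}$ (which sum to one), sum over $u$ with the $1/\degr{u}$ weight, and pigeonhole over $v \in S$ to obtain some $v^\star \in S$ with $\sum_u(\Walk^\ell(u,v^\star) - \statdist(u))^2/\degr{u} \geq (1-\delta)^{2\ell}/(2m)$. Because $\degr{u} \geq 1$ for every $u$, dropping the $1/\degr{u}$ factor only strengthens the inequality, so $\hnormb{\Walk^\ell(\cdot,v^\star) - \statdist}^2 \geq (1-\delta)^{2\ell}/(2m)$, comfortably beating the target $(1-4\delta)^{2\ell}/(16 m^7)$. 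The main bookkeeping obstacle is that $\Walk$ is not symmetric: the clean Parseval-type identities live on the similar matrix $\Nwalk$, so threading $\tau = \Degr^{-1/2}\sigma$ between the two sides introduces $\degr{u}^{\pm 1/2}$ weightings that must cancel or be bounded at the end; one must also check that the loss from $(1-\setcond{S}/2)$ to the clean $(1-\delta)$ Rayleigh bound is absorbed by the hypothesis $\vol{S} \leq \vol{\bar{S}}$.
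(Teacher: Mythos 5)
Your proof is correct, and at the top level it follows the same strategy as the paper's: decompose the degree-weighted indicator of $S$ in the orthonormal eigenbasis of $\Nwalk$, use the Dirichlet form of the normalized Laplacian together with $\setcond{S} \leq \delta$ to lower-bound the Rayleigh quotient, and then average over $v \in S$ to extract a single weak vertex. Where you genuinely diverge is in the two technical steps in the middle. First, to pass from the first-moment bound on $\sum_i \Nwalkew_i \alpha_i^2$ to the $2\ell$-th powers, the paper uses a Markov-type thresholding argument (at least $3\vol{S}/4$ of the spectral mass lies on eigenvalues above $1-4\delta$, then subtract $\alpha_1^2 \leq \vol{S}/2$), whereas you strip off the top eigenvector first and apply Jensen's inequality directly to $x \mapsto x^{2\ell}$ on the residual spectral measure; this is exactly why you retain $(1-\delta)^{2\ell}$ where the paper only gets $(1-4\delta)^{2\ell}$. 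Second, your localization is tighter: you identify the spectral quantity as the degree-weighted $\chi^2$-type distance $\sum_u ((\Walk^\ell\sigma)(u)-\statdist(u))^2/\degr{u}$ for the walk started from the restricted stationary distribution, then use convexity, pigeonhole, and $\degr{u} \geq 1$, whereas the paper's displayed chain of inequalities spends crude factors (e.g.\ $s \leq n$, $\vol{S} \geq 1$) and lands at $1/(16m^7)$. The net effect is a strictly stronger conclusion, $(1-\delta)^{2\ell}/(2m)$ versus $(1-4\delta)^{2\ell}/(16m^7)$, which implies the stated lemma. The only caveat, shared with the paper's own proof, is that both the Jensen step and the comparison of your bound with the target implicitly require $\delta$ to be small (certainly $\delta \leq 1/4$); this is harmless since the lemma is only invoked with $\delta = \Phi^2/256$.
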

  \begin{proof}
    We will argue that $\frac{1}{s} \sum_{x \in S} \normb{\Walk^\ell(x,\cdot) - \statdist}^2 > (1 - 4 \delta)^{2 \ell} / (16 m^7)$ and apply an averaging argument to conclude.

    Assume for the moment that
    \begin{equation}
      \label{eq:spec_bound_x_alpha_prelim}
      \sum_{\substack{i \geq 2\\ \Nwalkew_i > \tau}}^{n} \left( \sum_{x \in S} \sqrt{\degr{x}} \Nwalkev_i(x) \right)^2
      \geq \frac{\vol{S}}{4}, \text{ where $\tau = (1 - 4\delta)$}.
    \end{equation}

    Then, the following calculation concludes the proof:
    \begin{align*}
      & \frac{1}{s} \sum_{x \in S} \normb{\Walk^\ell(x,\cdot) - \statdist}^2 \\
      = &\frac{1}{s} \sum_{x \in S} \hnormb{\frac{\Degr}{2m} \left( 2m D^{-1} \Walk^\ell(x,\cdot) - \vecto{1} \right)}^2 \\
      = & \frac{1}{4m^2s} \sum_{x \in S} \hnormb{ \Degr \sum_{i \geq 2}^{n} \Nwalkew_i^\ell \frac{\Nwalkev_i(x)}{\sqrt{\degr{x}}} \Degr^{-1/2} \Nwalkev_i }^2 \text{, by \cref{thm:walk_matrix_decomposition}} \\
      \geq & \frac{1}{4 m^2 n} \hnormb{ \sum_{x \in S} \frac{\Deg^{1/2}}{s \cdot \degr{x}} \sum_{i \geq 2}^{n} \Nwalkew_i^\ell \sqrt{\degr{x}} \Nwalkev_i(x) \Nwalkev_i }^2 \text{, by Jensen's inequality} \\
      \geq & \frac{1}{4 m^2 n^3 s^2} \tau^{2 \ell} \sum_{\substack{i \geq 2\\ \Nwalkew_i > \tau}}^{n}
      \left( \sum_{x \in S} \sqrt{\degr{x}} \Nwalkev_i(x) \right)^2 \normb{ \Nwalkev_i }^2 \\
      \geq & \frac{1}{4 m^2 n^3 s^2} \tau^{2 \ell} \frac{\vol{S}}{4} \text{, by \cref{eq:spec_bound_x_alpha_prelim}} \\
      \geq & \frac{1}{16 m^7} (1 - 4 \delta)^{2 \ell} \text{, w.l.o.g. $\vol{S} \geq 1$} \fstop
    \end{align*}

    The remaining calculation is similar to the proof of \citep[][Lemma 3.5]{KalExp11}. We prove \cref{eq:spec_bound_x_alpha_prelim}. Let $\vecto{u} = D^{1/2} \vecto{1}_S$. Denote $\alpha_i \defeq \hinnerprod{\Degr^{1/2} \vecto{1}_S}{\Nwalkev_i}$. Representing $\vecto{u}$ in the orthonormal eigenbasis $\{ \Nwalkev_i \}_{i \in \onerange{n}}$ of $\Nwalk$, we have
    \begin{equation*}
      \vecto{u}
      = \sum_{i=1}^{n} \alpha_i \Nwalkev_i \fstop
    \end{equation*}

    By the definition of the normalized Laplacian,
    \begin{align}
      \label{eq:spec_lap}
      \vecto{u}^\Tr \Nlap \vecto{u}
        & = \vecto{u}^\Tr I \vecto{u} - \vecto{u}^\Tr \Nwalk \vecto{u} \fstop
    \end{align}

    Observe that
    \begin{gather}
      \label{eq:spec_uu}
      \vecto{u}^\Tr I \vecto{u}
      = \normb{\vecto{u}}^2
      = \sum_{i}^{n} \alpha_i^2\ \\
      \label{eq:uu_vol}
      \normb{\vecto{u}}^2
      = \sum_{i \in S} \sqrt{\degr{i}}^2
      = \vol{S} \fstop
    \end{gather}
    The second term of the right-hand side of \cref{eq:spec_lap} is equal to
    \begin{align}
      \label{eq:spec_uNu}
      \vecto{u}^\Tr \Nwalk \vecto{u}
      = \left( \sum_{i=1}^{n} \alpha_i \Nwalkev_i^\Tr \right) \Nwalk \left( \sum_{i=1}^{n} \alpha_i \Nwalkev_i \right)
      & = \left( \sum_{i=1}^{n} \alpha_i \Nwalkev_i^\Tr \right) \left( \sum_{i=1}^{n} \alpha_i \Nwalkew_i \Nwalkev_i \right) \nonumber \\
      & = \left( \sum_{\substack{i,j=1\\ i=j}}^{n} \alpha_i^2 \Nwalkew_i \Nwalkev_i^\Tr \Nwalkev_j \right)
        + \left( \sum_{\substack{i,j=1\\ i \neq j}}^{n} \alpha_i^2 \Nwalkew_i \Nwalkev_i^\Tr \Nwalkev_j \right) \nonumber \\
      & = \sum_{i=1}^{n} \alpha_i^2 \Nwalkew_i + 0 \fstop
    \end{align}

    Combining \cref{eq:spec_lap,eq:spec_uu,eq:spec_uNu}, we get that
    \begin{align}
      \label{eq:spec_uLu_a}
      \vecto{u}^\Tr \Nlap \vecto{u}
        & = \vecto{u}^\Tr I \vecto{u} - \vecto{u}^\Tr \Nwalk \vecto{u}
        = \normb{\vecto{u}}^2 - \sum_{i=1}^{n} \alpha_i^2 \Nwalkew_i \fstop
    \end{align}

    On the other hand,
    \begin{equation}
      \label{eq:spec_uLu_b}
      \vecto{u}^\Tr \Nlap \vecto{u}
      = \sum_{(i,j) \in E} \left( \frac{\vecto[i]{u}}{\sqrt{\degr{i}}} - \frac{\vecto[j]{u}}{\sqrt{\degr{j}}} \right)^2
      = \sum_{(i,j) \in \edgecut{E}{S}{\bar{S}}} (1-0)
      \leq \delta \vol{S} \fstop
    \end{equation}
    
    \Cref{eq:uu_vol,eq:spec_uLu_a,eq:spec_uLu_b} imply
    \begin{equation}
      \label{eq:spec_alpha_bound}
      \sum_{i \in \onerange{n}} \alpha_i^2 \Nwalkew_i \geq (1 - \delta) \vol{S} \fstop
    \end{equation}

    Let $H$ be the eigenvalues $\Nwalkew_i > 1 - 4\delta$, and define $x \defeq \sum_{\Nwalkew \in H} \alpha_i^2$.
    Rewriting \cref{eq:spec_alpha_bound}, we get that
    \begin{alignat}{3}
      \label{eq:spec_bound_large_ev}
      & x + \left( \sum_{i \in \onerange{n}} \alpha_i^2 - x \right) \left( 1 - 4\delta \right)
        && \geq (1 - \delta) \vol{S}
        && ~ \nonumber \\
      \Leftrightarrow \quad & 4 \delta x + \vol{S} (1 - 4 \delta)
        && \geq (1 - \delta) \vol{S}
        && \text{, by \cref{eq:spec_uu}} \nonumber \\
      \Leftrightarrow \quad & x
        && \geq \frac{3\vol{S}}{4}
        &&~
    \end{alignat}

    Note that
    \begin{equation*}
      \label{eq:spec_alpha_one}
      \alpha_1
      = \hinnerprod{\Degr^{1/2} \vecto{1}_S}{\sqrt{2m}^{-1}\Degr^{1/2} \vecto{1}}
      = \frac{1}{\sqrt{2m}} \sum_{i \in S} \degr{i}
      \leq \frac{\vol{S}}{\sqrt{2\vol{S}}}
      = \sqrt{\frac{\vol{S}}{2}} \fstop
    \end{equation*}

    Therefore, we have that
    \begin{equation}
      \label{eq:spec_bound_x_alpha}
      \sum_{\substack{i \geq 2\\ \Nwalkew_i > \tau}}^{n} \left( \sum_{x \in S} \sqrt{\degr{x}}\Nwalkev_i(x) \right)^2
      = x - \alpha_1^2
      \geq \frac{\vol{S}}{4} \fstop
    \end{equation}
  \end{proof}
  
  Actually, our goal is to get a result that is a bit stronger than \cref{thm:exists_one_weak_vertex}. However, it follows from \cref{thm:exists_one_weak_vertex} as \citep[][Lemma 3.6]{KalExp11} follows from \citep[][Lemma 3.5]{KalExp11}. It states that even if we exclude some vertices that account for a small fraction of the total volume of $S$, there exists a basically weak vertex.

  \begin{lemma}
    \label{thm:exists_set_weak_vertices}
    Let $T \subseteq S \subset V$ be such that $\vol{S} \le \vol{\bar{S}}$, $\setcond{S} \leq \delta$ and
    $\vol{T} = (1 - \theta) \vol{S}$ for some $0 < \theta \leq 1/10$.  Then, for any $\ell \in
    \setn$, there exists a vertex $v \in T$ such that
    \begin{equation*}
      \normb{\Walk^\ell(v,\cdot) - \statdist}^2 > \frac{1}{80 m^7} (1 - 4 \delta)^{2 \ell}
    \end{equation*}
  \end{lemma}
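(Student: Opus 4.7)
The plan is to mimic the proof of Lemma~\ref{thm:exists_one_weak_vertex} with $T$ in place of $S$. That proof establishes, via a spectral analysis of $\vecto{u} = \Degr^{1/2}\onevector{S}$, the averaging inequality $\frac{1}{|S|}\sum_{x \in S}\normb{\Walk^\ell(x, \cdot) - \statdist}^2 \geq \frac{1}{16m^7}(1-4\delta)^{2\ell}$, from which a single weak vertex in $S$ is extracted by averaging. Only two structural facts about $S$ enter the argument: $\normb{\vecto{u}}^2 = \vol{S}$ and $\vecto{u}^\Tr \Nlap \vecto{u} = |\edgecut{E}{S}{\bar S}| \leq \delta\vol{S}$.

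First I would compute the analogs for $\vecto{u}_T := \Degr^{1/2}\onevector{T}$. Clearly $\normb{\vecto{u}_T}^2 = \vol{T} = (1-\theta)\vol{S}$. For the Laplacian form, I would decompose $\edgecut{E}{T}{\bar T}$ into $\edgecut{E}{T}{S\minus T}$ and $\edgecut{E}{T}{\bar S}$, bound $|\edgecut{E}{T}{S\minus T}| \leq \vol{S\minus T} = \theta\vol{S}$ (each such edge has an endpoint in $S \minus T$) and $|\edgecut{E}{T}{\bar S}| \leq |\edgecut{E}{S}{\bar S}| \leq \delta\vol{S}$, obtaining $\vecto{u}_T^\Tr \Nlap \vecto{u}_T = |\edgecut{E}{T}{\bar T}| \leq (\delta+\theta)\vol{S}$.

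Next I would re-run the spectral calculation with $\alpha_i^T := \innerprod{\vecto{u}_T}{\Nwalkev_i}$ in place of $\alpha_i$. The analog of equation~\eqref{eq:spec_alpha_bound} reads $\sum_i (\alpha_i^T)^2 \lambda_i \geq \vol{T} - (\delta+\theta)\vol{S}$; splitting the eigenvalues at $\tau = 1-4\delta$ and subtracting the stationary contribution $(\alpha_1^T)^2 = \vol{T}^2/(2m) \leq \vol{T}/2$, the mass of $\alpha_i^T$ on $\{i \geq 2 : \lambda_i > \tau\}$ stays a positive constant fraction of $\vol{T}$ provided $\theta \leq 1/10$. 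Feeding this into the Jensen/Cauchy--Schwarz chain of the proof of Lemma~\ref{thm:exists_one_weak_vertex}, with $T$ and $|T|$ replacing $S$ and $|S|$, then yields $\frac{1}{|T|}\sum_{v \in T}\normb{\Walk^\ell(v, \cdot) - \statdist}^2 \geq \frac{1}{80m^7}(1-4\delta)^{2\ell}$, and averaging produces the promised $v \in T$. The factor $1/80$ (versus $1/16$ in Lemma~\ref{thm:exists_one_weak_vertex}) absorbs the accumulated losses from the cut perturbation $(\delta+\theta)\vol{S}$.

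The main obstacle is the constant bookkeeping in the eigenvalue split: the stationary mode contributes up to $\vol{T}/2$, so the mass concentrated on large eigenvalues must exceed this by a constant fraction of $\vol{T}$ after the cut perturbation. The hypothesis $\theta \leq 1/10$ is exactly what provides the needed slack, and it is also what dictates the worsening of the prefactor from $1/16$ to $1/80$.
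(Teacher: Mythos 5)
Your reduction of the cut $\edgecut{E}{T}{\bar{T}}$ to $\edgecut{E}{T}{S \minus T}$ plus $\edgecut{E}{T}{\bar{S}}$ and the resulting bound $\vecto{u}_T^\Tr \Nlap \vecto{u}_T \leq (\delta + \theta)\vol{S}$ are correct, but the step where you claim the eigenvalue mass above $\tau = 1-4\delta$ "stays a positive constant fraction of $\vol{T}$ provided $\theta \leq 1/10$" has a genuine gap. Redoing the split of \cref{eq:spec_alpha_bound} with your perturbed Rayleigh quotient gives, for $x_T \defeq \sum_{i: \Nwalkew_i > 1-4\delta} (\alpha_i^T)^2$,
\begin{equation*}
  x_T + \bigl(\vol{T} - x_T\bigr)(1-4\delta) \;\geq\; \vol{T} - (\delta+\theta)\vol{S}
  \quad\Longrightarrow\quad
  x_T \;\geq\; \Bigl(\tfrac{3}{4} - \tfrac{\theta}{4\delta} - \theta\Bigr)\vol{S} \fstop
\end{equation*}
The loss term $\theta/(4\delta)$ arises because the extra $\theta\vol{S}$ in the Laplacian form gets divided by the eigenvalue gap $4\delta$ at the threshold, and it is \emph{not} controlled by $\theta \leq 1/10$ alone: the lemma must hold for arbitrarily small $\delta$ (in the application $\delta = \Phi^2/256$ while $\theta = 1/10$, so $\theta \gg \delta$), and then your lower bound on $x_T$ is negative and the argument collapses. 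Lowering the threshold to $1 - 4(\delta+\theta)$ instead would only prove a bound of the form $(1-4(\delta+\theta))^{2\ell}$, which is far weaker than the claimed $(1-4\delta)^{2\ell}$ when $\ell = \Theta(\log n / \Phi^2)$.

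The paper avoids recomputing the Rayleigh quotient for $T$ altogether. It keeps the threshold calibrated to $S$, reuses $\sum_{i \in H}\alpha_i^2 > \tfrac{3}{4}\vol{S}$ from the proof of \cref{thm:exists_one_weak_vertex}, and treats $\vecto{u}_T$ as an $\ell_2$-perturbation of $\vecto{u}_S$ in coefficient space: $\sum_{i \in H}(\alpha_i - \beta_i)^2 \leq \normb{\vecto{u}_S - \vecto{u}_T}^2 = \theta\vol{S}$, whence the triangle inequality on the span of $H$ gives $\sum_{i \in H}\beta_i^2 > \bigl(\sqrt{3/4} - \sqrt{\theta}\bigr)^2\vol{S} > \tfrac{11}{20}\vol{S}$, independently of $\delta$. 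Subtracting the stationary mode then leaves $\vol{S}/20$ (versus $\vol{S}/4$ before), which is where the factor $1/80$ comes from. You would need to replace your threshold computation by this perturbation argument for the proof to go through.
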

  \begin{proof}
    Let $\vecto{u}_S \defeq D^{1/2} \vecto{1}_S$ and $\vecto{u}_T \defeq D^{1/2} \vecto{1}_T$.
    Let $\alpha_i \defeq \innerprod{\vecto{u}_S}{\Nwalkev_i}$ and $\beta_i \defeq \innerprod{\vecto{u}_T}{\Nwalkev_i}$.
    \Cref{eq:spec_bound_large_ev} holds, where $H = \{ \Nwalkew_i \sep \Nwalkew_i > (1 - 4 \delta) \}$:
    \begin{equation*}
      \sum_{i \in H} \alpha_i^2 > \frac{3\vol{S}}{4} \fstop
    \end{equation*}
    It holds that
    \begin{equation*}
      \sum_{i \in H} (\alpha_i - \beta_i)^2
      \leq \sum_{i} (\alpha_i - \beta_i)^2
      = \normb{\vecto{u}_S - \vecto{u}_T}^2
      = \sum_{x \in S} \degr{x} - \sum_{x \in T} \degr x
      = \vol{S} - \vol{T} = \theta \vol{S} \fstop
    \end{equation*}
    Using the triangle inequality $\normb{a - b} \geq \normb{a} - \normb{b}$ on the subspace spanned by the basis $H$, we have
    \begin{align*}
      \sum_{i \in H} \beta_i^2
      \geq \left[ \sqrt{\sum_{i \in H} \alpha_i^2} - \sqrt{\sum_{i \in H} (\alpha_i - \beta_i)^2} \right]^2
      &> \left[ \sqrt{\frac{3\vol{S}}{4}} - \sqrt{\theta \vol{S}} \right]^2 \\
      &= \frac{3\vol{S}}{4} - \sqrt{\frac{3\theta}{4}} \vol{S} + \theta \vol{S} \\
      &> \frac{11}{20} \vol{S}
    \end{align*}
    Observe that $\beta_1^2 \geq \alpha_1^2 \geq \frac{\vol{S}}{2}$.
    Similar to \cref{eq:spec_alpha_bound}, we have
    \begin{equation*}
      \sum_{\substack{i \geq 2\\ \Nwalkew_i > \tau}}^{n} \left( \sum_{x \in T} \sqrt{\degr{x}}\Nwalkev_i(x) \right)^2
      = \sum_{i \in H} \beta_i^2 - \beta_1^2
      \geq \frac{\vol{S}}{20} \fstop
    \end{equation*}
    Therefore, we obtain that
    \begin{equation*}
      \frac{1}{s} \sum_{x \in T} \normb{\Walk^\ell(x,\cdot) - \statdist}^2
      \geq \frac{\vol{S}}{80 m^2 n^3 s^2} (1 - 4 \delta)^{2 \ell} \fstop
    \end{equation*}
  \end{proof}
  
  \thmspeccorollary*
  \begin{proof}
    Apply \cref{thm:exists_set_weak_vertices}, mark the weak vertex that has been found in $T$ and exchange it for some vertex in $S \minus T$ that has not been marked yet. By \cref{thm:exists_set_weak_vertices}, we can repeat this process until $\vol{T} \geq \theta \vol{S}$.
  \end{proof}
  
    We sketch a proof here. The algorithm is as described in Section~\ref{sec:unknown-size-graph}.
  \section{Proofs from \Cref{sec:lower_bound}}
  
  \subsection{Proof of \cref{thm:lower_bound_implication}}
  \label{proof:thm_lower_bound_implication}
  
  \thmlowerboundimplication*
  \begin{proof}
    We make a simple modification to $\dalg{A}$ to obtain $\dalg{B}$: In the first local computation phase, $\dalgon{B}{v}$ draws a random number $id_v$ uniformly from $\{ 1, \ldots, n^3 \}$ and feeds it into $\dalgon{A}{v}$ as $I(v)$.  Then, $\dalgon{A}{v}$ is executed as normal.  For $u,v \in V$, the probability that $id_u$ and $id_v$ are equal is $1/n^3$.  Applying a union bound, with probability $1 - o(1)$, it holds that $id_u \neq id_v$ for every $u,v \in V$. We then run algorithm $\mathcal{A}$ on this new instance and output the result.
  \end{proof}

\end{document}